\newtheorem{proposition}{Proposition}
\newtheorem{theorem}{Theorem}
\newtheorem{lemma}{Lemma}
\newtheorem{definition}{Definition}
\newtheorem{statement}{Statement}
\begin{document}

\title{Formal verification of a time-triggered hardware interface}

 \author{Julien Schmaltz \\ 
 Open University of the Netherlands \\ School of Computer Science \\ Postbus 6401 DL Heerlen, The Netherlands \\  
  email: Julien.Schmaltz@ou.nl}

\date{}

\maketitle

\begin{abstract}
  We present a formal proof of a
  time-triggered hardware interface. The design implements the bit-clock synchronization mechanism specified 
  by the FlexRay standard for
  automotive embedded systems. The design is described at the gate-level.
  It can be translated to Verilog and synthesized on FPGA.
  The proof is based on  a general model of asynchronous communications
  and combines interactive theorem proving 
  in Isabelle/HOL and automatic model-checking using 
  NuSMV together with a model-reduction procedure, IHaVeIt.
  Our general model of asynchronous communications defines
  a clear separation between analog and digital concerns.
  This separation enables the combination of theorem proving
  and model-checking for an efficient methodology.
  The analog phenomena are formalized in the logic of 
  Isabelle/HOL. The gate-level hardware is automatically
  analyzed using IHaVeIt.
  Our proof reveals the correct values of a crucial parameter of the bit-clock synchronization mechanism.
  Our main theorem proves the functional correctness as well as 
  the maximum number of cycles of the transmission. 
 
\end{abstract}

\section{Introduction}

 
  Communications in distributed systems inherently are asynchronous. To cope 
  with clock imperfections different clock synchronization 
  algorithms are required. FlexRay~\cite{PS05} defines a standard for reliable communications
  in  safety-critical automotive applications. In particular, it defines 
  a \emph{bit-clock synchronization} algorithm that guarantees proper bit transmission
  between two independently clocked registers connected \textit{via} a shared bus.
  In this paper, we prove the formal correctness of a hardware interface
  implementing this bit-clock synchronization algorithm.

  \begin{figure}[ht]
    \centering
    \input{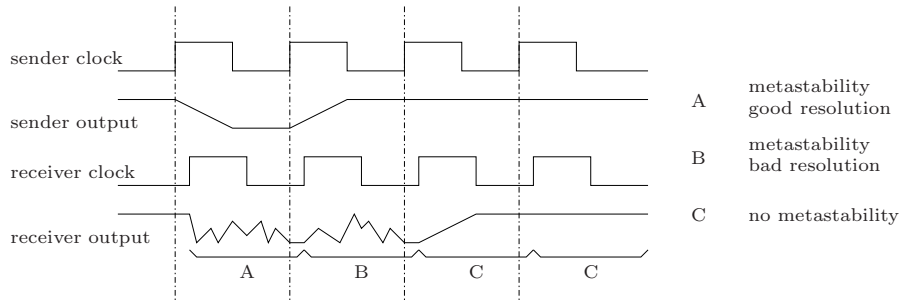}
    \caption{Asynchronous communications and metastability}
    \label{fig:async}
  \end{figure}

  Figure~\ref{fig:async} illustrates one difficulty of interfacing two independently clocked registers\footnote{Our presentation 
    owes a great debt to Moore's introduction~\cite{BIPHIM93}. In particular, Figure~\ref{fig:async} is largely inspired by Figure~2 of Moore's paper.}.    
  Assume a sender and a receiver communicating via a shared bus.
  This picture first shows the sender clock and the signal output on the bus.
  The sender output progressively changes from 1 to 0 and then from 0 to 1.
  In the picture, the receiver
  clock is slightly out-of-phase, i.e., receiver edges appear slightly after sender ones.
  It might be possible for the receiver to sample a signal that is neither a logical 0 nor a logical 1 (See period A and B in Figure~\ref{fig:async}).
  In that case, the receiver reaches a metastable state and ceases to behave as a digital device, i.e.,
  its output is oscillating between high and low voltages. Metastability cannot be avoided~\cite{maenner88}. 
  After the \emph{resolution time}, the receiver output stabilizes 
  to a well-defined value. In the picture, the resolution time is less than a clock cycle. 
  The receiver output stabilizes to 0 (period A) before the end of the cycle. The resolution value is non-deterministic.
  In Figure~\ref{fig:async}, the first metastability resolved to the value sent by the sender but 
  resolution to the negation of this expected value also is possible (see period B).
  For the last two cycles, the sender keeps its output stable and the receiver can always sample 
  a well-defined value. It never reaches a metastable state (periods C).
  In the picture, the clock periods of the receiver and the sender are always equal.
  In practice, clocks suffer from \emph{jitter}. The clock period of one clock is not constant over time, i.e., two successive
  clock cycles will have different lengths. 
  Clocks also suffer from \emph{drift}. 
  The frequencies of two independent clocks are drifting from each other over time.

  The FlexRay interface guarantees proper transmission despite jitter, drift, and metastability.
  The basic idea is that senders keep their output stable long enough to create a \emph{sweet spot} for sampling on the receiver side. 
  We call this stable period a \emph{safe sampling window}. To prevent metastable states, receivers sample bits in the middle of this window.
  If receivers are faster or slower than the sender they will read bits at the beginning or at the end 
  of this window. But if this window is large enough, they will still sample in the region where sender output signals are stable.
  To prove the correctness of our implementation of the FlexRay algorithm, we develop an abstract 
  and formal model of jitter, drift and metastability. This model is general and can be reused in other 
  proof efforts. Our proof shows how to use this abstract model of analog phenomena to reason about digital hardware designs.

  The abstraction of analog phenomena is captured in 
  Proposition~\ref{thm:input-values} (Section~\ref{sec:good}).
  This proposition states precise conditions on the signal produced by the sender. These conditions 
  guarantee successful data transmissions. In Figure~\ref{fig:async}, the last bit
  can be sampled properly because the sender keeps its output signal stable.
  The conditions of Proposition~\ref{thm:input-values} ensure that the sender 
  keeps its output stable long enough to let the receiver sample properly. 
  This proposition mentions analog entities only. 
  Our goal is to analyze \emph{digital} designs. 
  Proposition~\ref{thm:dig-thm} (Section~\ref{sec:comb-world})
  identifies conditions that the sender part of the hardware interface
  must satisfy to ensure proper reception at the receiver part of the hardware interface. These conditions 
  concern \emph{digital} aspects only.
  The formal analysis of hardware designs can abstract away from all analog considerations and stay 
  in the scope of usual automatic verification techniques, e.g., model-checking~\cite{CGP99}. 
  Our main theorem (Theorem~\ref{transcorr}, Section~\ref{sec:proof}) proves that a message of $l$ 
  bytes can be sent and recovered properly using our hardware implementation despite 
  imperfect clocks and asynchronous communications.

  Our model and proof have been developed entirely within the Isabelle/HOL~\cite{IsabelleTutorial}
  theorem prover. 
  Our abstract model of asynchronous communications and the hardware design are represented in the logic of Isabelle. 
  Interactive theorem proving is used to define our abstract model and prove Propositions~\ref{thm:input-values} and~\ref{thm:dig-thm}.
  Properties of the hardware designs are automatically proven using the 
  NuSMV model-checker~\cite{NuSMV02}. NuSMV is used within
  Isabelle \textit{via} a model-reduction interface, named IHaVeIt~\cite{Tverdyshev:TIME08,Tv09}. 
  The synchronization mechanism used in the design is based on resetting a counter when a specific
  sequence of bits is detected. This specific reset value is crucial to the correctness of the 
  algorithm. In the proof of Theorem~\ref{transcorr}, Statements 1 and 2
  identify the exact values ensuring synchronization. This shows that the values 
  proposed in this paper and in the FlexRay standard are correct while
  the value proposed in an early version of our hardware interface~\cite{AutoICCD05} is not.

  In summary, our contribution consists in (1) a clear presentation of a precise model of asynchronous communications;
  (2) the combination of this model with the discrete semantics of hardware design; (3) a hybrid
  verification methodology combining automatic tools with interactive theorem proving; and (4) the proof of 
  the hardware implementation of a time-triggered interface.
  Our proof reveals the specific values of a crucial parameter that ensure
  proper sampling of arbitrary long messages.
  Some of these results have been presented in previous publications~\cite{FMCAD06,FMCAD07}.
  This paper gives a more precise and comprehensible presentation 
  of a unified and extended version of them.

  In the next section we give an overview of our model and its use
  in the verification of the hardware interface.
  The bit-clock synchronization algorithm and its hardware implementation 
  are described in Section~\ref{sec:hardware}.
  The hardware design can be translated to Verilog~\cite{Verilog} and synthesized on FPGA. 
  We present our model of asynchronous communications in 
  Section~\ref{sec:concepts}.  
  This Section presents Proposition~\ref{thm:input-values}.
  We explain the principles of our combination of Isabelle/HOL and IHaVeIt/NuSMV
  in Section~\ref{sec:embedding} and illustrate the derived proof method using a simple example 
  in Section~\ref{sec:proof-example}.

  Section~\ref{sec:proof} 
  proves our correctness theorem by induction on the number of bytes in messages. 
  It shows the values for
  a correct algorithm and gives details about the induction step.
  The difficulty of this proof is that the main theorem states the 
  correctness of the receiver state machine and the synchronization
  hardware. The latter involves reasoning about analog phenomena.
  These two facts are not independent as the hardware controls 
  the state machine and \textit{vice versa}. We need to prove their correctness simultaneously.
  Finally, Section~\ref{sec:related}  discusses related work 
  and Section~\ref{sec:conclu} presents our conclusions.

\section{Overview of our model and our proof}
\label{sec:overview}

  \begin{table}[t]
    \centering
    \begin{tabular}{c|r}
      $\delta$    & Bound on the jitter of all clocks\\
      $\pi$                & Bound on the drift (number of clock cycles) \\
      $c$                   & sender cycle \\
      $\xi$                & receiver cycle \\
      $e_u(x)$            & real-time of the occurrence of edge number $x$ on unit $u$\\
      $\cy(\xi,c)$      & "the mark'' (receiver cycle $\xi$ affected by sender cycle $c$) \\
      $\beta_c^\xi$   & metastability factor (0 or 1 depending on metastability) \\
      $\alpha$           & distance from the mark\\
      $\chi$               & drift factor (-1, 0, or 1) \\
      $t_h$                 & register holding time (real number, \% of receiver clock) \\
      $t_s$                 & register set-up time (real number, \% of receiver clock) \\
      $t_{p_\min}$        & register minimum propagation delay (real number, \% of sender clock)\\
      $t_{p_\max}$       & register maximum propagation delay (real number,  \% of sender clock) \\
      $s(t)$               & value of signal $s$ at real-time $t$ \\
      $\Omega$       & abstract logical value representing signal oscillations\\
      $l[i]$                & element with index $i$ in list $l$ \\
      $\zeta(s,t)$     & conversion to $\{0,1\}$ of signal $s$ at real-time $t$ \\
      $\gamma(l)$   & conversion of list of bits $l$ to a signal taking values in $\{0,1, \Omega\}$\\
      $\clk_u$         & clock of unit $u$ \\
      $\ce_u$          & clock enable signal of output register of unit $u$\\
      $\out_u$        & output signal of output register of unit $u$ \\
      $\inp_u$          &  input signal of input register of unit $u$ \\
      $R_u$             & input or output register of unit $u$ \\
      ${}_aR_u$        & analog register of unit $u$
    \end{tabular}
    \caption{Notations}
    \label{tab:notations}
  \end{table}

  This section gives an overview of our formal model and proof. It introduces principles
  without giving formal definitions. Some notations are mentioned in this section, but 
  only defined later on in the paper. 
  Table~\ref{tab:notations} summarizes the notations used all along this paper.
  
\subsection{Abstract model of asynchronous communications}
  
  Asynchronous communications are facing three issues: clock drift, clock jitter, and 
  metastability. 
  Clock drift denotes the fact that clocks have different frequencies.
  Clock jitter denotes the fact 
  that the frequency of \emph{one particular} clock is not constant over time. This means
  that two consecutive clock cycles may have two different lengths. Finally, registers may sample
  undefined signals and reach metastable states. 
  Our formal model of asynchrony takes these three aspects into account. 
  
  Clock jitter 
  is formalized in Definition~\ref{eq:bcd} (notations $\delta$, Section~\ref{sec:basic-def}).
  The bound on the jitter defines the maximum and minimum
  length of the clock period of all clocks in the system. 
  From this bound on the clock jitter, we derive 
  in Proposition~\ref{eq:bcd-pred} bounds on the drift between two clocks.  
  Our bound
  is expressed as the maximum number of cycles in which the number of clock edges -- 
  called clock \emph{ticks} -- of two clocks may differ by at most one.
  This maximum number of cycles
  is denoted as $\pi$. 
  Given two clocks $u$ and $v$, our bound states that if clock $u$ advances by $\alpha \leq \pi$ clock ticks
  then it is known that clock $v$ will advance by $\alpha$ ticks or $\alpha \pm 1$ ticks.
  Our bound on clock jitter is the same for all clocks of a system. Consequently,
  our bound on the clock drift also is the same for all pairs of clocks.
  Metastability is modeled in the formal definition of \emph{analog} registers (Figure~\ref{fig:ar-def}, Section~\ref{sec:ar}).
  When a register samples a signal that is neither a logical 1 nor a logical 0, its output oscillates before stabilizing.
  Oscillations are represented by an undefined logical value (notation $\Omega$).
  Resolution is represented by a non-deterministic choice between 0 and 1. 
  
  A sender cycle is often referred to as cycle $c$. A receiver cycle is often referred to as cycle $\xi$.
  The time of the rising edge starting cycle $x$ on unit $u$ is noted $e_u(x)$ (Equation~\ref{eq:edge-time}, Section~\ref{sec:basic-def}).
  On a receiver unit, the rising edge $\xi$ that is the closest in time to sender rising edge $c$ is said to be "marked" (or "affected") by $c$, notation $\cy(\xi,c)$
  (See Definition~\ref{def:mark}, Section~\ref{sec:mark}). 
  According to our bound on the clock drift and from a pair of cycles $c$ and $\xi$, we know the mark of any cycle that is less than $\pi$ cycles 
  away from sender cycle $c$. Given a mark $\cy(\xi,c)$ and a distance $\alpha \leq \pi$, the mark for all cycles $c+\alpha$ is known with an error of at most one cycle, 
  i.e., we have $\cy(\xi + \alpha + \chi, c + \alpha)$ with $\chi \in \{ -1, 0, 1 \}$ (Proposition~\ref{thm:reg-affw}, Section~\ref{sec:mark}).
  As edges $c$ and $\xi$ appear approximately at the same time,
  at the time of edge $\xi$, the output signal of the sender output register might not be stable yet -- i.e. still between a logical 0 and a logical 1 -- 
  and  the receiver input register may become metastable. The resolution of this metastable state is a non-deterministic choice that might be 
  the opposite of the value sent by the sender. This resolution to the wrong value might introduce one cycle delay in the receiver input stage.
  Resolution to the good value or no metastable states do not introduce any delay and are treated as a unique case.
  This case distinction is formalized in the \emph{metastability factor}, notation $\beta_c^\xi$ (See Definition~\ref{def:beta}, Section~\ref{sec:meta}).
  The metastability factor returns 1 if a delay is introduced at receiver edge $\xi$ marked by sender cycle $c$ and 0 otherwise.
  Formally, we have $\beta \in \{ 0, 1 \}$.
  
  Finally, the global error is the sum of the error introduced by metastabilities (factor $\beta$) with the error introduced 
  by clock imperfections (factor $\chi$). This sum gives an error in the set $\{ -1, 0, 1, 2\}$. 
  Proposition~\ref{thm:transcorr} (Section~\ref{sec:good}) shows that our implementation of the FlexRay algorithm
  can transmit bits properly even if the number of cycles needed to sample each byte might vary 
  by four cycles.

\subsection{Integration of analog and digital aspects}

  \begin{figure}[h]
    \centering
    \input{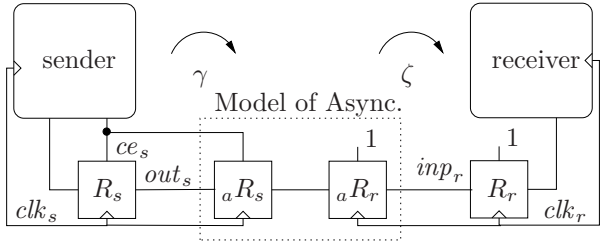}
    \caption{Mixing Analog and Digital Signals}
    \label{fig:mixed-scheme}
  \end{figure}

  We consider the setting pictured in Figure~\ref{fig:mixed-scheme}. 
  The dotted box shows our model of asynchronous communications and two instances 
  of our definition of an \emph{analog} register (Figure~\ref{fig:ar-def}, Section~\ref{sec:ar}).
  Outside this dotted box, the sender and the receiver units as well as their registers connected to the bus 
  correspond to the descriptions made by hardware designers in their favorite 
  hardware description language (e.g., VHDL or Verilog). 
  Registers are composed of a control signal ($\ce$), input and output signals (e.g., $\inp_r$ and $\out_s$), 
  and a clock. 
  In our case, designs are represented 
  in the syntax of Isabelle/HOL. Nevertheless, our description corresponds to synthezisable 
  Register Transfer Level (RTL) designs. 
  The tool IHaVeIt
  can automatically generate
  Verilog code from our Isabelle/HOL syntax~\footnote{More information on the tool can be found at http://www-wjp.cs.uni-saarland.de/ihaveit/.}.
  The idea is to superpose our abstract model above the digital designs. 
  These designs are not modified.
  The purpose of our formal 
  model is to provide an abstraction of the analog phenomena related to asynchronous communications. It identifies 
  constraints on the \emph{digital} units that are sufficient to guarantee proper 
  transmissions in our model of asynchronous communications. This abstraction 
  is captured in Proposition~\ref{thm:input-values} (Section~\ref{sec:good}) and Proposition~\ref{thm:dig-thm} (Section~\ref{sec:comb-world}).
  Proposition~\ref{thm:input-values} identifies the constraints that guarantee proper transmission in our analog model.
  Proposition~\ref{thm:dig-thm} shows which constraints are required on the (digital) sender unit to guarantee proper reception 
  on the (digital) receiver side via our analog model. Proposition~\ref{thm:dig-thm} makes the connection between the digital 
  world of hardware designs and the analog world of asynchronous communications.  
  
\section{Synchronization mechanism and hardware implementation}
 \label{sec:hardware}

 In this section, we introduce the protocol and its hardware implementation.
 We first give an overview of the format of messages and the principles of the protocol.
 We briefly discuss the implementation of sender units. We give more details on the 
 implementation of receiver units.

  \subsection{Protocol overview}
  \label{sec:prot-overview}

  We consider the transmission of bits between an arbitrary 
  number of units connected through a shared bus.
  A basic idea of the time-triggered approach is to give every unit access to the bus 
  during a specific \emph{time slot}. The concatenation of all time slots form a \emph{round} (Figure~\ref{fig:round}).
  Rounds are repeated over and over again. This gives every unit regularly access to the bus.
  During its time slot each unit can send one message.
  Outside its sending slot, a unit listens to the bus waiting for incoming messages. Each unit can send and receive.
  Idle units send a logical one to the bus. At each time, the value on the bus is the conjunction of all the values 
  output by all units. 
  
  \begin{figure}[h]
    \centering
    \input{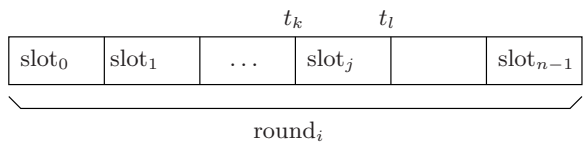}
    \caption{Round and slots}
    \label{fig:round}
  \end{figure}
  
  The division of a round into time slots is a global variable of the entire distributed system.
  To avoid a situation where two units are sending a message at the same time slot, there must 
  be a global understanding on when every slot begins and ends.
  The difficulty is that
  each individual unit is independently clocked and
  each one of them may be at a different time point in a round. It might happen that unit $i$ has its clock
  at the beginning of slot $n$ whereas unit $j$ is still in slot $n-1$, or \textit{vice versa}. 
  One objective 
  of the FlexRay architecture is to maintain the global synchronous abstraction despite the clock imperfections.
  In this paper, we are analyzing a small part of it, namely the \emph{bit-clock synchronization} algorithm.
  This algorithm handles the bit transmission between two independently clocked registers.
  We now describe it.
  The pictures and related explanations are extracted from the FlexRay standard (Chapter 3 Section 3.2.2. of~\cite{PS05}). 

  The principle of the protocol is explained in Figure~\ref{fig:princ1}.
  The first line gives the output of the sender at each clock cycle.
  The second line shows the bit read by the receiver. The last line
  shows the value of a counter maintained by the receiver.
  The counter counts from one to eight.

  The basic idea is to mark the start of the transmission of each byte with a falling edge.
  This falling edge constitutes the byte start sequence $\BSS$ and is created by bits $\BSS[0]$ and $\BSS[1]$. 
  Each bit is sent for eight clock cycles. 
  Figure~\ref{fig:princ1} shows
  the sender output consisting of a byte surrounded by two falling edges.
  The counter is used by the receiver to determine which of the eight copies of a bit should be sampled.
  In the Figure, the receiver samples a bit when its counter equals 5. This sample point
  is called the \emph{strobe} point. The counter is reset to 2 each time the receiver detects a
  falling edge.

  \begin{figure}[h]
    \centering
    \input{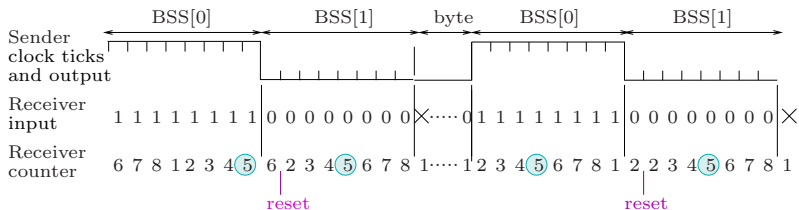}
    \caption{Principle of the protocol}
    \label{fig:princ1}
  \end{figure}

  In the Figure, the counter starts with value six instead of one. This illustrates a situation 
  where the receiver is out of synchronization by three cycles. Because of this delay,
  the receiver samples the last copy of $\BSS[0]$. Then, it detects a falling edge and the counter 
  is reset to two. The receiver samples the fifth copy of the next bit. In the context of perfect clocks,
  the receiver would sample the fifth copy of every bit. 
  After the first falling edge, the receiver misses one bit.
  This is illustrated by the cross replacing a 0 or a 1. This corresponds to a situation 
  where either the receiver clock was too fast and the receiver sampled the last copy of bit $\BSS[1]$
  twice or the receiver was too slow and the receiver will sample the first bit of the byte twice.
  The consequence of these two situations is that the receiver will sample the fourth copy 
  of every bit instead of the fifth one. After detecting the next falling edge and resetting
  the counter, the receiver samples the fifth copy again. So, despite clock imperfections 
  the receiver always starts sampling the fifth copy of every bit. The receiver is kept in synchronization 
  with the sender.

  \begin{figure}[ht]
    \centering
    \input{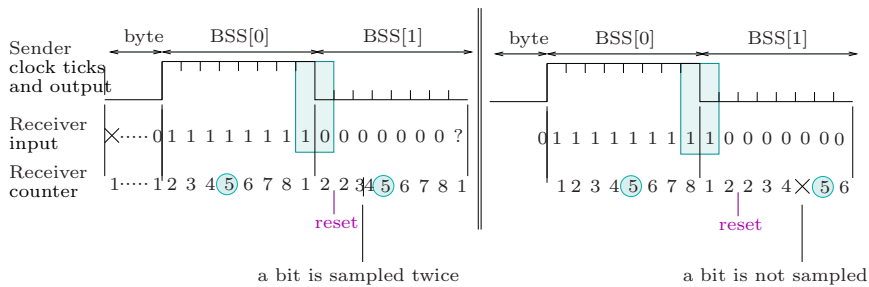}
    \caption{The protocol, drift and metastability}
    \label{fig:princ2}
  \end{figure}

  Figure~\ref{fig:princ2} shows how the protocol works in the presence of drift and metastability.
  Metastability may happen when the receiver samples signals that are transiting between a
  logical 0 and a logical 1 or \textit{vice versa}. 
  As the sender produces eight copies of the same bit, metastability 
  may only take place when sampling the first copy of each bit. 
  The resolution of the metastable state is non-deterministic.
  In the left part in Figure~\ref{fig:princ2}, the receiver reads the correct value of the first bit of $\BSS[1]$.
  This illustrates either that there was no metastability when reading this bit or that metastability resolved to 
  the expected value. 
  The right part in Figure~\ref{fig:princ2} illustrates bad resolution when sampling the first bit of $\BSS[1]$.
  The receiver reads a 1 instead of a 0.
  In Figure~\ref{fig:princ1}, the receiver always reads eight copies of every bit. In practice,
  because of clock imperfections, the receiver might only read seven copies. Formally,
  we can prove that at least seven copies are always read properly (Proposition~\ref{thm:input-values}, Section~\ref{sec:good}).
  The fact that the eighth copy might be misread is pictured by a '?' in Figure~\ref{fig:princ2}. 
  Depending on the effect of metastability these seven copies can be read "early" (left part in Figure~\ref{fig:princ2}) 
  or "late" (right part in Figure~\ref{fig:princ2}).
  
  Our bound on clock jitter and drift is such that missing one cycle in the period starting with the first bit of $\BSS[0]$
  and ending with the last copy of the last bit of the byte is the worst case (Figure~\ref{fig:princ1}). 
  When sampling the following $\BSS$ sequence and byte, a cycle might be missed again. 
  The left part in Figure~\ref{fig:princ2} shows the case where the receiver is \emph{faster}
  than the sender. The counter is updated twice (to 3 and then to 4) when reading only one copy
  of a bit. The consequence is that the receiver will \emph{strobe} earlier and store the fourth copy
  of the bit. The right part in Figure~\ref{fig:princ2} shows the case where the receiver is \emph{slower}
  than the sender. The counter needs two copies of a bit to be updated from 4 to 5. 
  The consequence is that the receiver strobes one cycle later and stores the sixth "good" copy 
  of the bit. 
  The idea is that at least 
  seven copies of every bit will always be stable and ready to be read.
  The objective of the protocol is to strobe one of these seven "good" copies despite drift and metastability.
  The difference between the strobe point and the reset is crucial to the correctness of the protocol.
  Our proof (Section~\ref{sec:lemma2}, statements~\ref{statement1} and~\ref{statement2}) shows that correctness is achieved when this difference is of at least one cycle
  and not greater than three cycles.
  For larger or smaller value of this difference the protocol fails.

  In summary, the main principle of this protocol is to use the $\BSS$ sequence as a "mark" 
  used by receivers to synchronize with the sender.  
  The falling edge of the $\BSS$ sequence "marks" the beginning of a new byte. 
  When a receiver detects that mark it will reset its counter 
  to a specific value. After sampling a byte and because of clock 
  drift the counter of each individual receiver might be slightly different.
  They will detect the next mark with different values of their counter. 
  But, they will all detect the next falling edge and reset their counter to the same value.
  They will all start sampling the next byte with the same value 
  achieving synchronization.


  \subsection{Sender module}

  As idle units put a one on the bus, a sender starts a transmission with a zero.
  This bit is called the transmission start sequence, noted $\TSS$.
  The sender then creates a rising edge by sending another zero and then a one.
  This sequence is called the frame start sequence, noted $\FSS$.
  Before transmitting each byte, the sender starts with the falling edge of 
  the byte start sequence made of $\BSS[0]=1$ and $\BSS[1] = 0$. 
  Finally, the sender ends the transmission with 2 bits creating 
  a rising edge. The last sequence is called the frame end sequence, noted 
  $\FES=01$.
  Let $\tup{a,b}$ be the concatenation of bit vectors $a$ with $b$.
  A message $m$ of $l$ bytes is encapsulated into a frame $f(m)$ with the following format:
  \begin{displaymath}
    \label{eq:frame-format}
    f(m) =  \tup{\TSS,\FSS,\BSS,m[0],\dots,\BSS, m[l-1], \FES}
  \end{displaymath} 
  Each bit of a frame is sent for eight clock cycles.

  \begin{figure}[ht]
    \centering
    \input{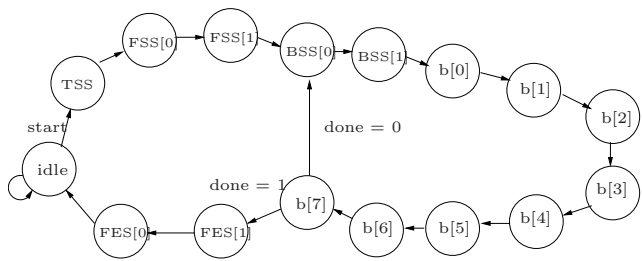}
    \caption{Control Automaton}
    \label{fig:automaton}
  \end{figure}

  The sender embeds bytes into frames by the control automaton in Figure~\ref{fig:automaton}.
  As specified by the protocol, in each state the corresponding bit is generated eight times. 
  The sender is connected with the shared bus through a register named $R_s$ with control enable bit $\ce_s$ (See Figure~\ref{fig:mixed-scheme}).
  This paper focuses on the verification of message reception. We do not detail the 
  sender implementation any further.

  \subsection{Receiver implementation: Bit clock synchronization}
 
  \begin{figure}[ht]
    \centering
    \input{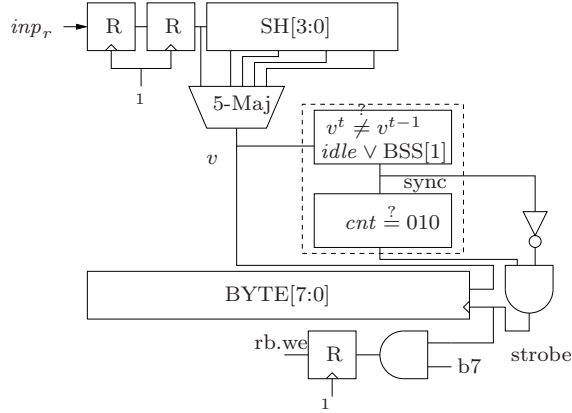}
    \caption{Input Stage}
    \label{fig:recv1}
  \end{figure}
  
  The receiver module implements the same state automaton as the sender. 
  In each state, the receiver is expecting to receive the corresponding bit of the frame eight times.
  Beside the automaton, the relevant part of this receiver consists of the input stage 
  pictured in Figure~\ref{fig:recv1}.
  The first two registers form a ``synchronizer'' used to remedy to metastability.
  Designers used a 2-stage synchronizer, which means that they assume that the resolution time 
  of the metastability is less than one clock cycle.
  The results presented here would be equally applicable to synchronizers of any length.
  A five majority vote is performed.  
  Signal $\sync$ is used to detect the synchronization sequence $\BSS$. It is high 
  if and only if the current voted bit does not equal its previous value and the state automaton is 
  either in state {\it idle} or in state $\BSS[1]$.  When $\sync$ is high counter $\cnt$ is reset to $000$ in the next cycle.
  Let $s^t$ be the value of signal $s$ at hardware
  cycle $t$. Let $z$ denote the state of the receiver automaton.
  Signal $\sync$ is defined by the following Equation:
  \begin{equation}
    \label{eq:sync}
    \sync^t \equiv v^t \neq v^{t - 1} \wedge (z^t = \BSS[1] \vee z^t = \idle)
  \end{equation}
 
  Counter $\cnt$ is defined as follows:
  \begin{equation}
    \label{eq:cnt}
    (\cnt^{t + 1}  = \cnt^t + 1 \wedge \lnot \sync^t) \vee \cnt^{t + 1} = 000  
  \end{equation}

  The state automaton is clocked by signal {\it strobe}, which is high each time 
  the counter reaches value $010$ and the automaton is not 
  synchronizing, i.e., when signal $\sync$ is low. 
  The formal definition of signal $\strobe$ is as follows:
  \begin{equation}
    \label{eq:strobe}
    \strobe^t \equiv \cnt^t = 010 \wedge \lnot \sync^t
  \end{equation}

  Each time {\it strobe} is high, the voted bit is stored in shift register BYTE.
  When the last bit has been stored (i.e., automaton is in state b[7]) and 
  signal {\it strobe} is high,
  signal $rb.we$ turns high and BYTE is written to the main receiver buffer. 
 
  Our implementation differs slightly from the FlexRay guidelines. 
  The standard suggests to reset the counter to $010$ and to strobe when it reaches $101$.
  We reset to $000$ and strobe at $010$. 
  The parameter crucial to the algorithm is the difference between the strobe and the reset values.
  We chose to reset to $000$ because it is slightly simpler to implement than a reset to $010$. 
  In our configuration, the difference between the strobe and the reset values is of
  two cycles. In the FlexRay standard, the difference is of three cycles. 
  In a previous implementation of this algorithm~\cite{AutoICCD05}, 
  the counter is reset to $000$ and $\strobe$ is high when $\cnt$ is $100$.
  In this configuration, the difference between the strobe and reset points is of four cycles.
  One cycle more than in the FlexRay standard. We prove that the synchronization 
  algorithm works only if this difference is of at least one cycle and not greater than
  three cycles.

  \section{Asynchronous communications and the main statement}
  \label{sec:concepts}
  
  This section presents our formal model of asynchronous communications. We first 
  define signals and clocks. After that, we define our bounds on clock jitter and drift.
  After defining the metastability factor and analog registers, the section concludes
  with the correctness of the bit transmission (Proposition~\ref{thm:input-values}).

  \subsection{Signals and clocks}
  \label{sec:sig-clk}

  Time is represented by the nonnegative reals ($\Rp$).
  We assume a finite number of electronic control units (abbr. $\ecu$)
  that are connected through a shared bus.
  The set of all the units is noted $\U$.

  A signal $s$ is represented by a function $s(t)$ from {\it real} time $t$ to $\{0, 1, \Omega\}$:
  1 and 0 mean ``high'' and ``low'' voltages; $\Omega$ means any \emph{other} voltage. 
  Value $\Omega$ abstracts in one logical value all voltages that cannot be identified as a logical 1 or 0.
  Formally, signals have the following functionality:
  \begin{displaymath}
    s: \Rp \to \{0, 1, \Omega \}
  \end{displaymath}

  Because of their cyclic behavior, clocks are not represented by signals but by their period.
  The clock period of unit $u$ is noted $\tau_u$.
  This represents the ideal case. In practice, clock periods suffer from jitter and are not constant over time.
  Jitter is introduced hereafter in Section~\ref{sec:basic-def}.
  Periods are different from zero.
  The {\it time} of the $c^{\mathit{th}}$ rising edge of clock $\clk_u$ of unit $u$ is 
  given by function $e$. Formally, $e$ is a function which converts discrete 
  time to real time relative to the clock of unit $u$. 
  \begin{equation}
   \label{eq:edge-time}
    e: \mathbb{N} \times \U \to \Rp
  \end{equation}
  Function $e$ is defined as the product of $c$ with the clock period:
  $e(c,u) = c\cdot \tau_u$.
  To simplify our notation, we shall write $e_u(c)$ instead of $e(c,u)$.

  A clock cycle is defined by the time interval between two rising clock edges.
  Clock cycle $c$ at unit $u$ is represented by 
  interval $] e_u(c) : e_u(c + 1)]$. The interval is left open to represent the fact 
  that the cycle starts when the clock edge has reached value 1.
  
\subsection{Clock jitter and clock drift}
  \label{sec:basic-def}

  Function $e$ gives the ideal time of edges.
  In practice, 
  clocks suffer from \emph{jitter} and the length of a clock period is not constant over time.
  We assume that all clock periods of any clock deviate at most by a fraction $\delta$ of a reference clock period.
  This reference clock is named $\clk_\tref$. Its period is $\tau_\tref$.
  \begin{definition}
        \label{eq:bcd}
        {\bf Bounded Clock Jitter.}
  \begin{displaymath}
    \Gamma_u \equiv 1 - \delta \leq \frac{\tau_u}{\tau_\tref} \leq 1 + \delta
  \end{displaymath}   
  \end{definition}

  We are not interested in the deviation at each cycle, but in the number of cycles in which the number of 
  ticks of two independent clocks may differ by at most one.
  Let $\pi$ be that number. 
  In this interval, the maximum drift between two clocks is obtained between the slowest and the fastest clocks allowed by 
  our bound on the clock jitter (Definition~\ref{eq:bcd}).
  We derive a bound on the clock drift from the ratio between the minimum and the maximum clock periods.
  From the bound on the clock jitter (Definition~\ref{eq:bcd}) and choosing 
  $\pi = \frac{1-\delta}{2\cdot \delta}$, we prove the following proposition:
  \begin{proposition} {\bf Bounded Clock Drift.}
    \begin{displaymath}
      \label{eq:bcd-pred}
      \Gamma_i \wedge \Gamma_j \rightarrow \frac{\pi}{\pi + 1} \leq \frac{\Min(\tau_i, \tau_j)}{\Max(\tau_i, \tau_j)} 
    \end{displaymath}
  \end{proposition}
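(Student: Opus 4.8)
The plan is to reduce the proposition to a purely algebraic inequality in $\delta$ and the prescribed value of $\pi$. First I would observe that the hypotheses $\Gamma_i$ and $\Gamma_j$ confine both periods to the closed interval $[(1-\delta)\tau_\tref,\,(1+\delta)\tau_\tref]$. Since periods are nonzero (as fixed in Section~\ref{sec:sig-clk}) and $\delta$ is a small positive fraction, i.e.\ $0 < \delta < 1$, both endpoints are strictly positive. By the symmetry of the statement in $i$ and $j$, I may assume without loss of generality that $\tau_i = \Min(\tau_i,\tau_j)$ and $\tau_j = \Max(\tau_i,\tau_j)$.

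Next I would bound the ratio from below by its worst case. The map $(x,y)\mapsto x/y$ is nondecreasing in $x$ and nonincreasing in $y$ over strictly positive arguments, so $\Min(\tau_i,\tau_j)/\Max(\tau_i,\tau_j)$ is smallest when the numerator takes its least admissible value $(1-\delta)\tau_\tref$ and the denominator its greatest admissible value $(1+\delta)\tau_\tref$. Cancelling the common factor $\tau_\tref$ gives
\[
  \frac{\Min(\tau_i,\tau_j)}{\Max(\tau_i,\tau_j)} \;\geq\; \frac{1-\delta}{1+\delta}.
\]

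It then remains to check that the left-hand side of the proposition equals this bound for the prescribed $\pi = \frac{1-\delta}{2\delta}$. Substituting, one computes $\pi + 1 = \frac{1-\delta}{2\delta} + 1 = \frac{1+\delta}{2\delta}$, so the factor $2\delta$ cancels and
\[
  \frac{\pi}{\pi+1} \;=\; \frac{1-\delta}{1+\delta}.
\]
Chaining this equality with the inequality above yields the claim. In fact the bound is attained exactly at the extremal clock configuration, which explains why this particular value of $\pi$ is the tightest one admissible under Definition~\ref{eq:bcd}.

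The argument is essentially a one-line computation once the setup is fixed, so I expect no serious obstacle. The only points needing care are the side conditions guaranteeing positivity, namely $\tau_\tref > 0$ and $0 < \delta < 1$, which are what license the monotonicity argument for $x/y$ and keep the denominators nonzero, together with the explicit verification that the chosen $\pi$ makes $\frac{\pi}{\pi+1}$ \emph{coincide} with $\frac{1-\delta}{1+\delta}$ rather than merely bound it.
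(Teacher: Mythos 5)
Your proof is correct and follows essentially the same route the paper indicates: bounding the ratio of periods by its extremal value $\frac{1-\delta}{1+\delta}$ using Definition~\ref{eq:bcd}, and then verifying algebraically that the choice $\pi = \frac{1-\delta}{2\delta}$ makes $\frac{\pi}{\pi+1}$ equal exactly that bound. The side conditions you flag ($\tau_\tref > 0$ and $0 < \delta < 1$) are indeed the only hypotheses needed beyond the stated ones.
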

  This property is preserved for any number less than $\pi$.

\subsection{Metastability factor}
  \label{sec:meta}

  Our model of metastability links three parts: (1) the undefined voltage $\Omega$, (2) the
  non-deterministic resolution of metastable states to a well-defined value, and (3) resolution to the  
  negation of the expected value. Points (1) and (2) are related 
  in the formal definition of \emph{analog} registers (Figure~\ref{fig:ar-def}).
  The last point is captured by the \emph{metastability factor}, 
  notation $\beta_c^\xi$ (Definition~\ref{def:beta}). We now 
  specify the behavior of analog registers and formally define 
  the metastability factor. Then, we continue with the formal definition 
  of analog registers.

  \begin{figure}[ht]
    \centering
    \input{simple-timdia.pstex_t}
    \caption{Behavior of the register w.r.t clock edge $c$}
    \label{fig:simple-timdia}
  \end{figure}

  Registers consist of one input signal $\In$, one clock signal $\clk$, one control signal $\ce$,
  and one output signal $\Out$. 
  Figure~\ref{fig:simple-timdia} illustrates the behavior of a register.
  A new value ($x$) is input to the register at cycle $c$ (interval $]e_u(c):e_u(c+1)]$). 
  During minimum propagation delay $t_{p_\min}$ the output signal equals previous value $y$.
  Because the control signal is high, the output oscillates (i.e., is $\Omega$) before 
  stabilizing at new value $x$.
  If the control signal is low, the output does not oscillate and keeps its old value $y$.

  If the input or the control signals do not have a constant value during the {\it setup time} 
  (noted $t_s$) before edge $c$
  or during the {\it holding time} (noted $t_h$) after edge $c$, the register may become metastable.
  This means that its output may still be $\Omega$ after $t_{p_\max}$.
  After resolution of this 
  metastability, the receiver input register will output either the value 
  sent by the sender or its negation. The former case 
  is equivalent to the case when there is no metastable state.
  Therefore, we always assume resolution to the negation of the expected input. 
  This case distinction is represented by the \textit{metastability factor} ($\beta$).
  Metastability can only happen if an edge -- say $\xi$ -- (minus the setup time) appears while the 
  sender output is undefined, i.e., before $e_s(c) + t_{p_\max}$. 
  In this case, the metastability factor returns 1. It returns 0 otherwise.
  Formally, the \textit{metastability factor} is a function, 
  which takes as arguments cycles $\xi$ and $c$, and two clocks.
  \begin{definition}\label{def:beta} \textbf{Metastability Factor.}\\[0.1cm]
    \verb+     +$\beta(\xi, c, clk_s, clk_r) \triangleq$ 
    $\aif e_r(\xi) - t_s \leq e_s(c) + t_{p_\max} \athen 1\ \aelse 0$
  \end{definition}
  To alleviate the notation, we shall write $\beta_c^\xi$ instead of $\beta(\xi,c,\clk_s, \clk_r)$.
  The notation $\beta_c^\xi$ denotes whether sampling the bit sent at sender cycle $c$ is affected
  by a potential metastability at receiver cycle $\xi$.

  \subsection{Formal definition of analog registers}
  \label{sec:ar}
  
   A signal $s$ is stable during time interval $[t_1 : t_2]$ if it holds the value at time $t_1$ 
   until time $t_2$. A signal $s$ has a defined value during time interval $[t_1 : t_2]$ if it never equals 
   $\Omega$ during that interval. Formally, this is expressed as follows\footnote{Note: $\stadep$ means
     {\bf sta}ble, {\bf de}fined, {\bf p}redicate}:
   \begin{displaymath}
     \stadep (t_1,t_2,s) \triangleq  \exists b \in \{0,1\}, \forall t \in [t_1:t_2], s(t) = b
   \end{displaymath}

\begin{figure}
  ${}_aR_u(c,\clk_u,\ce_u, \In_u, \Out_u^0) \triangleq$ \\
  \aif $c = 0$ \athen $\lambda t. \Out_u^0$ \aelse \\
  \verb+ +\aif $ \left\{
  \begin{array}{cl}
    & \stadep(e_u(c) - t_s, e_u(c) + t_h, \ce_u) \\ \wedge & \stadep(e_u(c) - t_s, e_u(c)+ t_h, \In_u)
  \end{array} \right. $
  \athen ;; stable inputs -- no metastability

  \verb+  +\aif $\ce_u(e_u(c)) = 1$ \athen ;; update with new value \\
  \verb+  +$\lambda t. \left\{ 
    \begin{array}{lcl} 
      {}_aR_u(c-1, \dots )(e_u(c)) & : & t \in e_u(c) + ]0: t_{p_\min}] \\ 
               \Omega                            & : & t \in e_u(c) + ]t_{p_\min} : t_{p_\max}] \\
               \In_u(e_u(c))                    & : & t \in e_u(c) + ]t_{p_\max} : \tau_u] \\
               \Omega                            & : & t\  \notin\ e_u(c) + ] 0 : \tau_u]\  \mbox{;; to make function total}
    \end{array} \right. $\\
  \verb+  +\aelse ;; keep old value\\
  \verb+  +$\lambda t. \left\{ 
    \begin{array}{lcl} 
      {}_aR_u(c-1, \dots )(e_u(c)) & : & \forall t \in \ e_u(c) + ] 0 : \tau_u] \\ 
                \Omega                            & : & t\  \notin\ e_u(c) + ] 0 : \tau_u] \  \mbox{;; to make function total}
    \end{array} \right.$ \\
  \verb+  +\aendif \\
  \verb+ +\aelse ;; metastability -- non-deterministic resolution to 0 or 1\\
  \verb+ +$\lambda t.\left\{ 
                           \begin{array}{lcl}
                             {}_aR_u(c-1, \dots)(e_u(c)) & : & t \in e_u(c) + ]0: t_{p_\min}]  \\                              
                             \Omega                                  & : & t \in e_u(c) + ]t_{p_\min} : \tau_u - t_s [ \\
                             x \in \{0,1\}                           & : & t = e_u (c + 1) + [-t_s : 0 ]\\
                             \Omega                                  & : & t\  \notin\ e_u(c) + ] 0 : \tau_u]\  \mbox{;; to make function total}
                           \end{array} \right.                $ \\
  \verb+ +\aendif\\
  \aendif
  \caption{Definition of Analog Registers}
  \label{fig:ar-def}
\end{figure}

  The formal definition of the analog behavior is given by function ${}_aR_u$ (Figure~\ref{fig:ar-def}).
  We are interested in the output value of a register
  for all real times during cycle $c$.
  Function ${}_aR_u$ takes as arguments a cycle $c$, a clock signal $\clk_u$, a clock enable control signal $\ce_u$, 
  an input signal $\In_u$, and the initial output value $\Out_u^0$. 
  It generates a signal.

  If no setup or holding time violation occurs, the register behaves normally.
  If the control signal is low, the register keeps its old value (at the previous cycle $c-1$); if the control signal is high
  the output keeps its previous value during $t_{p_\min}$, then oscillates (i.e., is $\Omega$) to finally reach its final 
  value at time $e_u(c) + t_{p_\max}$. 
  If input signal $\In_u$ or control signal $\ce_u$ is not stable and defined during interval $e_u(c) + [- t_s : t_h]$, 
  the register becomes metastable. The output equals the previous
  computation until $t_{p_\min}$ (included) and $\Omega$ afterwards. 
  At the end of the cycle, metastability has been resolved and the output equals an arbitrary but defined value.
  To make the function total, $\Omega$ is output for all times outside the cycle.  
  To alleviate our notation, we shall write ${}_aR^c_u$ instead of 
  ${}_aR_u(c,\clk_u,\ce_u, \In_u, \Out^0_u)$.

  Formally, all timing parameters ($t_h, t_s, t_{p_\min}, t_{p_\max}$) are real numbers expressed
  as percentages of the local clock period.
  In the remainder of this paper, if not precise otherwise, 
  propagation delays are relative to the sender clock and
  setup and holding times are relative to the receiver clock period.
  We assume that the sum of these parameters is less than 1.

  \subsection{The "mark"}
  \label{sec:mark}
  
\begin{figure}[ht]
  \centering
  \input{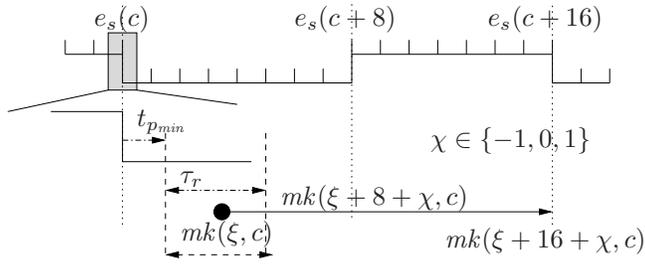}
  \caption{Relating Receivers and Senders}
  \label{fig:cy-fig}
\end{figure}

  The relation between a sender and a receiver is pictured in Figure~\ref{fig:cy-fig}. 
  A sender starts sending three different bits at edges $c$, $c + 8$, 
  and $c + 16$. Each bit is sent for eight clock cycles.
  If we take a closer look around edge $c$, the sender output is not 
  modified before $e_s(c) + t_{p_\min}$, when it moves from $y$ to $\Omega$ 
  (see Figure~\ref{fig:simple-timdia} for more details).
  If a receiver samples before that time, it will get the old value. 
  It is not yet {\it affected} by the new transfer. 
  In contrast, sampling {\it strictly after} that time will affect the receiver, either it will become metastable, 
  or it will detect a new value. 
  At most, it will take a receiver a full cycle to sample after this time.
  Let $\xi$ be the first receiver edge after $e_u(c) + t_{p_\min}$. 
  As this edge is the first one to be affected by the behavior of the sender, we denote it as 
  "marked with edge $c$'', noted $\cy(\xi,c)$. 
  If there is no ambiguity, we may drop the first argument.
  We name this edge "the affected cycle''. It is formally defined as follows:
  \begin{definition}\label{def:mark} {\bf Affected Cycle.}  $\cy(\xi,c) \equiv e_r(\xi) + t_h \in e_u(c) + ]t_{p_\min}:\tau_r ]$
  \end{definition}

  Suppose that 
  edge $\xi$ is affected by some cycle $c$ at which a sender puts a new bit on the bus.
  If the sender sends another bit 
  within a number of cycles ($\alpha$) less than our bound $\pi$, the corresponding 
  affected cycle may be seen by the receiver with a potential 
  error of one cycle, i.e., at $e_r(\xi + \alpha \pm 1)$.
  This means that subsequent marks are known with the same error.
  We name $\chi \in \{-1,0,1\}$ the \emph{drift factor}.
  Figure~\ref{fig:cy-fig} shows these marks for $\alpha = 8$ and 
  $\alpha = 16$. 
  Formally, we have the following Proposition:
  \begin{proposition} {\bf More Affected Cycles}
    \label{thm:reg-affw}
    \begin{displaymath}
      \begin{array}{l}
        \Gamma_r \wedge \Gamma_s \wedge 0 < \alpha \leq \pi \wedge \cy(\xi,c) 
        \rightarrow   \bigvee_{\chi \in \{-1, 0, 1\}} \cy(\xi + \alpha + \chi, c + \alpha)
      \end{array}
    \end{displaymath}
  \end{proposition}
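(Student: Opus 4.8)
The plan is to reduce the claim to elementary real-time inequalities about edge positions and then pin down $\chi$ by a pigeonhole argument driven by the bounded-drift bound. First I would unfold Definition~\ref{def:mark}. Since $\cy(\xi,c)$ designates $\xi$ as the first receiver edge whose shifted time $e_r(\xi)+t_h$ lies past $e_s(c)+t_{p_\min}$, I read off the bracketing $e_r(\xi-1)+t_h \leq e_s(c)+t_{p_\min} < e_r(\xi)+t_h$, and, letting $\xi'$ denote the affected cycle of $c+\alpha$, the analogous $e_r(\xi'-1)+t_h \leq e_s(c+\alpha)+t_{p_\min} < e_r(\xi')+t_h$. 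It then suffices to prove $\xi' \in \{\xi+\alpha-1,\ \xi+\alpha,\ \xi+\alpha+1\}$, because setting $\chi = \xi'-\xi-\alpha$ makes $\cy(\xi+\alpha+\chi, c+\alpha)$ hold by construction. A preliminary obligation here is well-definedness, namely that the affected cycle $\xi'$ of $c+\alpha$ exists at all; this follows since consecutive receiver edges are $\tau_r$ apart, so some edge always falls into the marking window, and I would dispatch it before the main estimate.

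Next I would eliminate the dependence on $c$. Using $e_s(c+\alpha)-e_s(c)=\alpha\tau_s$ and $e_r(m)-e_r(n)=(m-n)\tau_r$, where the jitter hypotheses $\Gamma_s$ and $\Gamma_r$ fix the admissible periods, I subtract the two bracketings to obtain $(\xi'-\xi-1)\tau_r < \alpha\tau_s < (\xi'-\xi+1)\tau_r$; writing $k=\xi'-\xi$, this is exactly $|k - \alpha\tau_s/\tau_r| < 1$. Independently, from $\Gamma_r \wedge \Gamma_s$ and $0<\alpha\leq\pi$ I would invoke Proposition~\ref{eq:bcd-pred} (bounded clock drift), which, being preserved for every count up to $\pi$, states that the receiver accumulates at most one extra or one missing tick over $\alpha$ sender cycles, i.e.\ $|\alpha\tau_s/\tau_r - \alpha| \leq 1$. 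Combining the two estimates by the triangle inequality yields $|k-\alpha| < 2$; since $k$ and $\alpha$ are integers this forces $|k-\alpha|\leq 1$, which is precisely the required $\chi\in\{-1,0,1\}$.

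The delicate point, and the step I expect to be the main obstacle, is the second estimate at its boundary. Expanding the drift ratio with the exact value $\pi = \frac{1-\delta}{2\delta}$, the receiver-faster case ($\tau_r<\tau_s$) saturates the bound: one computes $\alpha(\tau_s-\tau_r)/\tau_r \leq \pi\cdot\frac{2\delta}{1-\delta} = 1$, with equality attainable at $\alpha=\pi$ under extreme jitter, whereas the receiver-slower case gives $\pi\cdot\frac{2\delta}{1+\delta} = \frac{1-\delta}{1+\delta} < 1$ and so stays strictly below $1$. Hence the combination above works only because the \emph{first} estimate $|k-\alpha\tau_s/\tau_r|<1$ is \emph{strict}, coming from the half-open marking interval: even when the drift estimate is exactly $1$, the sum is still strictly below $2$ and the integer rounding to $|k-\alpha|\leq 1$ is safe. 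I would therefore track strict versus non-strict inequalities throughout and treat the two drift directions separately, so that the arithmetic tied to the specific value of $\pi$ is made fully explicit.
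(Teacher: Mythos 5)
Your argument is essentially sound and reaches the same conclusion, but by a genuinely different route from the paper. The paper's proof is a three-way positional case split: it asks where $e_r(\xi+\alpha)+t_h$ falls relative to the target window $e_s(c+\alpha)+\,]t_{p_\min}:\tau_r]$ and, in each case (before / within / after), directly exhibits which of the three candidate edges lands inside. You instead name the actual affected cycle $\xi'$ of $c+\alpha$, bracket both $\xi$ and $\xi'$ against their windows, and compress everything into the two estimates $|k-\alpha\tau_s/\tau_r|<1$ and $|\alpha\tau_s/\tau_r-\alpha|\le 1$ combined by the triangle inequality. Your version buys a cleaner single inequality and, valuably, makes explicit the boundary behaviour at $\alpha=\pi$ (the drift term saturating at exactly $1$ via $\pi=\frac{1-\delta}{2\delta}$, rescued only by the strictness of the half-open bracketing) --- something the paper's one-line sketch never spells out. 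The paper's version buys the fact that existence and localisation of the new mark are established in one stroke, because each case tests a concrete witness against the window rather than reasoning about an assumed $\xi'$.

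That difference matters at the one step of your proposal that does not hold up as written: the ``preliminary obligation'' that $\xi'$ exists. You dismiss it on the grounds that consecutive receiver edges are $\tau_r$ apart, so some edge must fall into the marking window; but by Definition~\ref{def:mark} that window is $]e_s(c+\alpha)+t_{p_\min} : e_s(c+\alpha)+\tau_r]$, of length $\tau_r - t_{p_\min} < \tau_r$, and a half-open interval strictly shorter than the edge spacing can contain no shifted edge at all. If $\xi'$ does not exist, your two bracketing inequalities are vacuous and the whole triangle-inequality step has nothing to stand on. Existence is not free here: it must come from the same ingredients as your main estimate --- the known mark $\cy(\xi,c)$ plus the drift bound --- which pin $e_r(\xi+\alpha)+t_h$ close enough to the new window that one of the three shifted edges can be shown to land inside it; that is essentially the computation the paper's case analysis performs (and even there the $t_{p_\min}$-wide sliver just below the window needs separate care). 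So either discharge existence first by running that positional argument, at which point your triangle-inequality derivation becomes a tidy repackaging of it, or restructure so that you never quantify over an a priori existing $\xi'$ but instead verify one of the three candidates directly, as the paper does.
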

  \begin{proof}
    We do a case analysis depending on the position of $\xi + \alpha$ 
  regarding the receiver cycle expected to be affected by sending at sender cycle $c+\alpha$.
  The expected affected cycle should be in the interval $e_u(c + \alpha) + ]t_{p_\min}:\tau_r ]$.
  If $e_r(\xi+\alpha)$ is (1) before that interval, we prove that it contains
  $e_r(\xi+\alpha + 1)$;
  (2) within that interval, this proves the obvious case where $\chi = 0$; 
  (3) after that interval, we prove that it contains
  $e_r(\xi+\alpha - 1)$.
  \end{proof}

  Proposition~\ref{thm:reg-affw} is important because it gives us which marks can be deduced from the knowledge of  
  a single one. In most of the proofs done in the analysis of the hardware, we always assume only 
  one mark. Then, we use Proposition~\ref{thm:reg-affw} to obtain subsequent marks and 
  perform a case analysis on the three possible times of these marks.

  \subsection{Correctness of asynchronous communications}
  \label{sec:good}

  To ensure that the receiver will not always sample $\Omega$'s, the sender keeps its output 
  constant for several cycles (say $k$ cycles). 
  If $k$ is large enough there exists a "sweet spot'' in which the receiver 
  can sample safely.
  Formally, 
  the safe sampling window of length $k$ w.r.t. cycle $c$ (noted $\SSW^c_k$) is 
  defined as follows: 
  \begin{definition} {\bf Safe Sampling Window.} 
    \begin{displaymath}
      \SSW^c_k \equiv ]e_\clk(c) + t_{p_\max} : e_\clk(c+k+1) + t_{p_\min}]
    \end{displaymath}  
  \end{definition}

  We prove that under our drift hypothesis, $\SSW_k^c$ entails up to $k-1$ receiver cycles (or $k$ edges), even in case of metastability.
  This shows the number of "good'' samples that guarantee reception by the receiver without metastability.
  \begin{proposition} \textbf{SSW's are large enough.}
    \label{thm:affcy}
    \begin{displaymath}
      \begin{array}{l}
        \Gamma_r \wedge \Gamma_s \wedge \cy(\xi, c) \wedge n + 1 \leq k \leq \pi 
        \rightarrow   \forall l \leq n, e_r(\xi+\beta_c^\xi + l) + [ - t_s : t_h] \in \SSW_k^c
      \end{array}    
    \end{displaymath}
  \end{proposition}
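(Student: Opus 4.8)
The plan is to unfold the set membership $e_r(\xi+\beta_c^\xi+l)+[-t_s:t_h]\in\SSW_k^c$ into its two defining inequalities: the left endpoint must lie strictly after the window's start, $e_r(\xi+\beta_c^\xi+l)-t_s> e_s(c)+t_{p_\max}$, and the right endpoint at or before its end, $e_r(\xi+\beta_c^\xi+l)+t_h\le e_s(c+k+1)+t_{p_\min}$. Since $e_r$ is monotone in the edge index, the lower inequality is tightest at $l=0$ and the upper inequality is tightest at $l=n$, so it suffices to discharge these two extreme instances. Throughout I would work relative to $e_s(c)$ and turn counts of receiver cycles into spans of sender cycles using the Bounded Clock Drift bound, which for $m\le\pi$ yields $m\,\tau_r\le(m+1)\,\tau_s$, the receiver being slower than the sender being the worst direction.

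For the lower bound at $l=0$ I case split on $\beta_c^\xi$. If $\beta_c^\xi=0$, then by Definition~\ref{def:beta} we have exactly $e_r(\xi)-t_s> e_s(c)+t_{p_\max}$, which is the claim. If $\beta_c^\xi=1$ the first good edge is $\xi+1$; here I combine the lower part of $\cy(\xi,c)$ (Definition~\ref{def:mark}), namely $e_r(\xi)+t_h> e_s(c)+t_{p_\min}$, with the assumption that the timing parameters sum to less than one clock period, which gives $\tau_r> t_{p_\max}-t_{p_\min}+t_s+t_h$. Advancing by one receiver period then yields $e_r(\xi+1)-t_s> e_s(c)+t_{p_\max}$.

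For the upper bound at $l=n$ I again split on $\beta_c^\xi$. The case $\beta_c^\xi=0$ is clean: from the upper part of $\cy(\xi,c)$ we have $e_r(\xi)\le e_s(c)+\tau_r-t_h$, so $e_r(\xi+n)+t_h\le e_s(c)+(n+1)\tau_r$, and since $n+1\le k\le\pi$ the drift bound converts this to $e_r(\xi+n)+t_h\le e_s(c)+(n+2)\tau_s\le e_s(c)+(k+1)\tau_s+t_{p_\min}$. The case $\beta_c^\xi=1$ with the receiver running slow is the crux: the last sampling edge is then $\xi+k$ in the worst instance $n+1=k=\pi$, and a naive accumulation from $\xi$ overshoots the window end by $t_{p_\max}-t_{p_\min}+t_s+t_h$. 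To close it I would not accumulate from $\xi$ but instead locate the affected cycle of the sender edge $c+k+1$ that defines the window's upper limit: applying Proposition~\ref{thm:reg-affw} (More Affected Cycles) at the appropriate distance shows that this affected cycle is at least $\xi+k+1+\chi$ with $\chi\ge -1$, so the last sampling edge $\xi+\beta_c^\xi+n$ falls strictly before it, and by Definition~\ref{def:mark} its strobe interval therefore ends at or before $e_s(c+k+1)+t_{p_\min}$.

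I expect the receiver-slow, $\beta_c^\xi=1$ instance of the upper bound to be the main obstacle. It is exactly here that the metastability shift and the clock drift act in the \emph{same} direction and must be absorbed by the single extra sender period built into $\SSW_k^c$, and the estimate only succeeds because $\pi=\frac{1-\delta}{2\delta}$ is chosen so that the one-tick drift guarantee of the Bounded Clock Drift bound covers the entire window of up to $k\le\pi$ cycles, and because the propagation, setup, and holding times (measured, respectively, against the sender and receiver periods) are assumed to sum to less than a clock period. Managing these two unit conventions carefully is what ultimately makes the tight inequality go through.
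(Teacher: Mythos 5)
The paper states Proposition~\ref{thm:affcy} without any textual proof --- it is one of the facts discharged inside the Isabelle development --- so there is no printed argument to compare yours against; I can only assess the proposal on its own terms. Your reduction to the two extreme instances $l=0$ and $l=n$, the case split on $\beta_c^\xi$ for the lower bound, and the $\beta_c^\xi=0$ upper bound are all sound, and you have correctly located the crux in the $\beta_c^\xi=1$ upper bound.

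The repair you propose for that crux, however, does not close it. First, Proposition~\ref{thm:reg-affw} is only available for distances $\alpha\le\pi$, whereas the sender edge $c+k+1$ that delimits the window sits at distance $k+1$, which is $\pi+1$ in the very instance you are worried about. Second, and more seriously, even where Proposition~\ref{thm:reg-affw} applies it only yields $\cy(\xi+k+1+\chi,\,c+k+1)$ for some $\chi\in\{-1,0,1\}$; in the branch $\chi=-1$ the affected cycle of $c+k+1$ is $\xi+k$, which is exactly the last sampling edge $\xi+\beta_c^\xi+n$ when $\beta_c^\xi=1$ and $n+1=k$. It does not fall strictly before that affected cycle, and Definition~\ref{def:mark} then gives the lower bound $e_r(\xi+k)+t_h>e_s(c+k+1)+t_{p_\min}$ --- the opposite of what you need. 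This branch is not vacuous: taking $e_r(\xi)$ at the metastability boundary $e_s(c)+t_{p_\max}+t_s$ and $\tau_r$ at its maximal value $\frac{\pi+1}{\pi}\,\tau_s$ with $n+1=k=\pi$ realizes it, and your own worst-case accounting shows the resulting overshoot of $t_{p_\max}-t_{p_\min}+t_s+t_h>0$ past the window end. The estimate therefore cannot be rescued by rearranging reference points; it genuinely requires slack between $k$ and $\pi$ sufficient to absorb the propagation, setup and holding times, a hypothesis the printed statement does not supply (though it holds by a wide margin in the FlexRay instantiation $k=7$). You should either exhibit why that corner is excluded or note explicitly that the statement needs $k$ bounded strictly below $\pi$; as written, your proof has a hole exactly where you predicted the difficulty would be.
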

  The proposition reads as follows. The first two terms of the hypothesis state that jitter on 
  the receiver and sender clocks is bounded. The third one states 
  that the bit sent at sender cycle $c$ corresponds to receiver 
  cycle $\xi$. The last term assumes that $k$ is not greater than our bound $\pi$ on the clock drift.
  The conclusion shows that the time of $n$ receiver edges together with their 
  set-up and holding times -- hence $n + 1 = k$ cycles -- is within the safe sampling window. 
  Note that there are $k$ cycles even in the presence of metastability ($\beta = 1$).

  Proposition~\ref{thm:transcorr} below proves that sampling in a safe sampling window is correct.
  This first line assumes that jitter is bounded, sender cycle $c$ is related to receiver cycle $\xi$,
  and that $k$ is not greater than our bound $\pi$ on the clock drift.
  The second line expresses the fact that the sender creates a safe sampling window of length $k$.
  The third and fourth assumptions state that
  control and input bits must be stable and defined during interval $e_s(c) + ] -t_s : t_h]$ to avoid metastabilities on the sender side.
  The last two assumptions state that there is a connection between the sender and the receiver and 
  that the time of receiver edge $\xi$ is in the safe sampling window ($\SSW_k^c$).
  The conclusion shows that the output of the receiver register equals the bit sent by the sender.
  
 \begin{proposition} {\bf Correct Transfer.}
   \label{thm:transcorr}
   \begin{displaymath}
     \begin{array}{cl} & \Gamma_r \wedge \Gamma_s \wedge \cy(\xi,c)   \wedge  c > 0 \wedge n + 1 \leq k \leq \pi   \mbox{ (*bounded drift, affected cycle*)}\\   
             \wedge 
                    &  \ce_s(e_s (c)) = 1 \wedge \forall l \in [1:k], \ce_s(e_s(c +l) = 0 \mbox{ (* $\SSW^{c}_k$ *)} \\ 
            \wedge & \forall l \in [0:k + 1], 
             \stadep(e_s(c + l) - t_s, e_s(c + l) + t_h, \In_s) 
             \mbox{(*input *)}\\ 
             \wedge & \forall l \in [0:k + 1], 
             \stadep(e_s(c + l) - t_s, e_s(c + l) + t_h, \ce_s))) \mbox{(*control*)}\\ 
             \wedge 
                    & \forall c, In_r = {}_aR_s(c, \clk_s, \ce_s, \In_s, \Out_s^0) \wedge \forall t, \ce_r(t) = 1 
                    \mbox{ (*analog connection*)}\\ 
             \wedge & e_r(\xi) + [- t_s: t_h] \in \SSW_k^c \mbox{ (* good cycle *)}\\
        \rightarrow 
                    & {}_aR^{\xi}_r (e_r(\xi + 1)) = In_s(e_s(c))
    \end{array}  
  \end{displaymath}
  \end{proposition}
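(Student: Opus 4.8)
The plan is to reduce the claim to two facts about the analog register definition of Figure~\ref{fig:ar-def}: first, that the sender register holds the constant, $\{0,1\}$-valued output $\In_s(e_s(c))$ at every real time in the safe sampling window $\SSW^c_k$; and second, that at receiver edge $\xi$ the receiver register neither becomes metastable nor keeps an old value, so it simply latches its input. Since by the analog-connection hypothesis $\In_r = {}_aR_s(c,\clk_s,\ce_s,\In_s,\Out_s^0)$, and $e_r(\xi) \in \SSW^c_k$ (because $0 \in [-t_s:t_h]$ and the whole set $e_r(\xi)+[-t_s:t_h]$ lies in the window by the ``good cycle'' hypothesis), chaining the two facts yields ${}_aR^{\xi}_r(e_r(\xi+1)) = \In_r(e_r(\xi)) = \In_s(e_s(c))$.

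I expect the sender-side invariant to be the main obstacle: for every $t \in \SSW^c_k$ one must show ${}_aR_s(\ldots)(t) = \In_s(e_s(c))$. I would prove it by induction on the cycle offset $l \in [0:k]$, using the stability hypotheses on $\In_s$ and $\ce_s$ over $[0:k+1]$ to guarantee that the ``no setup/hold violation'' branch of ${}_aR_s$ applies at each relevant edge, so that no $\Omega$ is produced except transiently. At cycle $c$ the control is high ($\ce_s(e_s(c))=1$), so the register takes the ``update with new value'' branch and outputs $\In_s(e_s(c))$ on the sub-interval $e_s(c)+]t_{p_\max}:\tau_s]$; this is exactly why the window opens at $e_s(c)+t_{p_\max}$. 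For each $l \in [1:k]$ the control is low, so the register takes the ``keep old value'' branch, outputting ${}_aR_s(c+l-1,\ldots)(e_s(c+l))$ throughout cycle $c+l$. Since $e_s(c+l)$ is the right endpoint of cycle $c+l-1$, it lies in that cycle's stabilized sub-interval, so the induction hypothesis carries the constant value forward. The closing point $e_s(c+k+1)+t_{p_\min}$ of the window is covered by the initial ``hold during $t_{p_\min}$'' portion of cycle $c+k+1$, which still exhibits the old stable value regardless of the control there. Throughout I would invoke the global assumption that the timing parameters sum to less than one, so that the sub-intervals are non-degenerate and in particular $t_{p_\max} < \tau_s$.

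With the sender invariant established, the receiver side is routine unfolding. Because $e_r(\xi)+[-t_s:t_h] \subseteq \SSW^c_k$, and $\In_r$ equals the now-constant, defined sender output on all of $\SSW^c_k$, both $\In_r$ and (by $\ce_r \equiv 1$) the control are stable and defined on $e_r(\xi)+[-t_s:t_h]$, so $\stadep$ holds and the receiver avoids the metastable branch. Being enabled, it executes the ``update with new value'' branch, whose output on $e_r(\xi)+]t_{p_\max}:\tau_r]$ — in particular at the boundary $e_r(\xi+1)=e_r(\xi)+\tau_r$ — equals $\In_r(e_r(\xi))$. Evaluating the sender invariant at $t=e_r(\xi)\in\SSW^c_k$ gives $\In_r(e_r(\xi))=\In_s(e_s(c))$, the desired equality. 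The affected-cycle hypothesis $\cy(\xi,c)$ (Definition~\ref{def:mark}) and the drift bound $k\le\pi$ play only a positioning role; together with Proposition~\ref{thm:affcy} they are what let the ``good cycle'' membership be discharged in the surrounding development, whereas the core of this argument is the two-sided unfolding of ${}_aR_s$ and ${}_aR_r$ described above.
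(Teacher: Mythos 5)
Your argument is correct, but it is organized differently from the paper's. The paper proves this proposition by a five-way case split on the position of the receiver sampling interval $e_r(\xi) + ]-t_s : t_h]$ relative to two reference points, $e_s(c+1)$ and $e_s(c+1+k)$: in each case one identifies which sender cycle(s) and which sub-interval of the ${}_aR_s$ recursion the sampled times fall into, and unfolds there (the paper's sketch also invokes Proposition~\ref{thm:affcy}, although, as you correctly observe, the ``good cycle'' hypothesis already supplies the needed membership inside this particular proposition, and Proposition~\ref{thm:affcy} is really what discharges it in Proposition~\ref{thm:input-values}). You instead factor out a window-wide invariant --- the sender register holds the constant, defined bit $\In_s(e_s(c))$ at every real time of $\SSW^c_k$ --- proved by induction on the cycle offset $l \in [0:k]$, with a separate boundary observation that every branch of ${}_aR_s$ reproduces the old value during the initial $t_{p_\min}$ of cycle $c+k+1$, which is exactly what closes the window at $e_s(c+k+1)+t_{p_\min}$; you then perform a single non-metastable unfolding of ${}_aR_r$ at edge $\xi$. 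The underlying work is the same (the keep-old-value cycles force an iterated unfolding of the recursion in either presentation), but your decomposition absorbs the straddling cases into the invariant, so no explicit positional case analysis on the receiver interval is needed, and it isolates a reusable sender-side lemma; the paper's organization keeps the reasoning local to the receiver interval at the cost of the five cases. I see no gap in your version.
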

  \begin{proof}
    First, Proposition~\ref{thm:affcy} gives us the position of receiver edges in the safe sampling window.
    Then, we case split on the position of interval $e_r(\xi) + ] - t_s : t_h]$.
    We set two reference points: $e_s(c+1)$ and $e_s(c+1+k)$.
    We prove the conclusion for 5 cases depending on the 
    position of  interval $e_r(\xi) + ] - t_s : t_h]$ regarding these points. 
  \end{proof}

  Finally, Proposition~\ref{thm:input-values} hereafter combines Proposition~\ref{thm:affcy} and Proposition~\ref{thm:transcorr} to prove 
  that for all edges in the safe sampling window the receiver 
  register samples properly. The five hypotheses equal the first five hypotheses of 
  the previous proposition. The conclusion shows that the output value of the 
  receiver register equals the value sent by the sender at cycle $c$ for $x$ cycles.
  Cycle $\xi + \beta_c^\xi$ denotes the first "good" sample after either a bit inversion
  introduced by wrong resolution of a metastable state ($\beta_c^\xi = 1$) or
  a proper reading at receiver cycle $\xi$ ($\beta_c^\xi = 1$). 
  Function ${}_aR^{\xi + \beta_c^\xi + x}(t)$ represents the output signal of the receiver register
  during each "good" cycle $x$. We consider the value at the end of each cycle,
  i.e., at time $e_r(\xi + \beta_c^\xi + x + 1)$.
  
  We illustrate Proposition~\ref{thm:input-values} for the FlexRay protocol 
  and its seven "good" values sketched in Section~\ref{sec:prot-overview} in Figures~\ref{fig:princ1} and~\ref{fig:princ2}.
  The FlexRay protocol specifies that senders must send a bit for eight clock cycles, i.e.,
  they keep their output stable for seven extra cycles. So, we have $k = 7$ and $n = 6$.
  The receiver always reads seven good copies, for $x = 0$ to $6$.
  Depending on the value of $\beta_c^\xi$, these seven good copies are read "early" ($\beta_c^\xi = 0$) or "late" ($\beta_c^\xi = 1$).

 \begin{proposition} {\bf Known Inputs.}
   \label{thm:input-values}
   \begin{displaymath}
     \begin{array}{cl} & \Gamma_r \wedge \Gamma_s \wedge \cy(\xi,c) \wedge c > 0 \wedge n + 1 \leq k \leq \pi \mbox{ (*bounded drift, affected cycle*)}\\   
             \wedge 
                    &  \ce_s(e_s (c)) = 1 \wedge \forall l \in [1:k], \ce_s(e_s(c +l) = 0 \mbox{ (* $\SSW^{c}_k$ *)} \\ 
            \wedge & \forall l \in [0:k + 1], 
             \stadep(e_s(c + l) - t_s, e_s(c + l) + t_h, \In_s) 
             \mbox{(*input *)}\\ 
             \wedge & \forall l \in [0:k + 1], 
             \stadep(e_s(c + l) - t_s, e_s(c + l) + t_h, \ce_s))) \mbox{(*control*)}\\ 
             \wedge 
                    & \forall c, In_r = {}_aR_s(c, \clk_s, \ce_s, \In_s, \Out_s^0) \wedge \forall t, \ce_r(t) = 1 
                    \mbox{ (*analog connection*)}\\ 
        \rightarrow 
                    & \forall x \in [0:n]: {}_aR^{\xi + \beta_c^\xi + x}_r (e_r(\xi + \beta_c^\xi + x + 1)) = In_s(e_s(c))
    \end{array}  
  \end{displaymath}
  \end{proposition}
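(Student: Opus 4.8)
The plan is to derive Proposition~\ref{thm:input-values} by composing Proposition~\ref{thm:affcy} with Proposition~\ref{thm:transcorr}: the former places each of the relevant receiver edges inside the safe sampling window $\SSW_k^c$, and the latter certifies that any edge so placed samples exactly the value $In_s(e_s(c))$ the sender put on the bus at cycle $c$. Note first that the five hypotheses of Proposition~\ref{thm:input-values} are verbatim the hypotheses shared by both auxiliary propositions, so every one of them is available unchanged; in particular they constrain only the sender cycle $c$ and the analog connection, and do not mention any specific receiver edge.

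I would fix an arbitrary $x \in [0:n]$ and abbreviate $\eta := \xi + \beta_c^\xi + x$. First I would instantiate Proposition~\ref{thm:affcy} with $l = x$; since $x \leq n$ and $n + 1 \leq k \leq \pi$, its conclusion gives $e_r(\eta) + [-t_s : t_h] \in \SSW_k^c$, i.e., the sampling window of edge $\eta$ lies inside the safe sampling window of $c$. This is precisely the ``good cycle'' premise of Proposition~\ref{thm:transcorr}. Supplying transcorr with the shared hypotheses together with this window-membership yields ${}_aR^{\eta}_r(e_r(\eta + 1)) = In_s(e_s(c))$, which is exactly the asserted equality for this $x$. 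Generalising over $x \in [0:n]$ then discharges the universal quantifier and closes the argument.

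The step I expect to be the main obstacle is justifying the use of Proposition~\ref{thm:transcorr} at the shifted edge $\eta$ rather than at the marked edge $\xi$ itself, since the mark $\cy(\xi,c)$ pins only the first affected edge and $\cy(\eta,c)$ fails once $x > 0$. What transcorr genuinely consumes, however, is the window-membership of the edge at hand: its case analysis branches on the position of $e_r(\eta) + [-t_s : t_h]$ relative to the two endpoints of $\SSW_k^c$, which is exactly what Proposition~\ref{thm:affcy} already supplies. I would therefore verify that transcorr uses $\cy(\xi,c)$ only to obtain that window-membership and is otherwise uniform in the edge, so its proof transports verbatim to $\eta$; equivalently, one re-runs the five-case split of transcorr for all $n+1$ edges at once. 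The second delicate point is the metastability offset: when resolution goes the wrong way ($\beta_c^\xi = 1$) the first usable sample is delayed by one cycle, yet Proposition~\ref{thm:affcy} still delivers $n+1$ in-window edges beginning at $\xi + \beta_c^\xi$. Keeping the index arithmetic of the offset $\beta_c^\xi + x$ aligned between the two propositions is what lets the same $n$ legitimately appear both in the bounded-drift hypothesis and in the conclusion.
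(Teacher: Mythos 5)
Your proposal matches the paper's own proof, which is exactly the two-step composition you describe: Proposition~\ref{thm:affcy} supplies the $n+1$ in-window receiver edges, and Proposition~\ref{thm:transcorr} is then invoked for each of them to obtain the sampled value. The obstacle you flag --- that Proposition~\ref{thm:transcorr} is stated for the marked edge $\xi$ but must be applied at the shifted edges $\xi + \beta_c^\xi + x$, where only window-membership rather than the mark itself is available --- is real, but the paper glosses over it in precisely the way you anticipate, relying on the fact that the five-case split in transcorr consumes only the position of the sampling interval within $\SSW_k^c$.
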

  \begin{proof}
    Proposition~\ref{thm:affcy} gives us $n+1$ cycles in the safe sampling window. For each one of them 
    we conclude using Proposition~\ref{thm:transcorr}. 
  \end{proof}

  This last Proposition will be rephrased in the next section to mix analog and digital worlds.
  It is key because it gives us which inputs are correctly sampled by the receiver when knowing
  the cycle at which the sender has put a bit on the bus. The "digitalized" version of this formula will 
  convert the conclusion to mention bits and not signals.

\section{Continuous model and discrete semantics}
\label{sec:embedding}

  Our model of asynchronous communications mentions analog entities only.
  The semantics is based on functions and a dense representation of time.
  Ultimately, we want to use this model to verify hardware designs described 
  in another semantics based on a discrete notion of time and 
  transition functions. Before describing our approach, we define type conversion functions and 
  rephrase Proposition~\ref{thm:input-values} to match bits and not signals.

 \subsection{Principle and soundness}

  We recall Figure~\ref{fig:mixed-scheme} that illustrates our integration of our analog results in 
  the analysis of digital designs.                                 
  Our model of asynchrony is shown inside the dashed box. The remainder of the Figure corresponds to digital 
  designs that are actually used to synthesize hardware. These designs are not modified. Our model is simply inserted as 
  a filter of the receiver inputs. Functions $\gamma$ and $\zeta$ converts bits to signals and signals to bits. We precise their definition
  in the next subsection.

  Digital designs are represented by their transition function, one application of which represents the computation of one clock cycle.
  The sender and the receiver parts are analyzed separately. The analysis of the sender does not need any analog arguments.
  It mainly consists of the proof that sender output $\out_s$ follows a specific frame format. The analysis of the receiver
  is done assuming correctness of the sender and that the connection of receiver input $\inp_r$ is done through our model of 
  asynchrony.
  We write that an element $s_u$ of unit $u$ has bit-value $x$ at cycle $c$ 
  - i.e., after $c$ applications of the transition function - as $s_u^c = x$. 
  Formally, we assume that 
  the value of input bit $\inp_r$ at hardware cycle $c$ equals 
  the output value of register ${}_aR_r$ at the time of edge $c+1$:
  \begin{equation}
    \label{eq:mix-ad}
    \forall c, \inp_r^c = \zeta({}_aR_r^{c}, e_r(c + 1))
  \end{equation}
  The left hand side represents the value that should be in register $R_r$ at $c + 1$.
  As the analog register is not part of the transition function of the receiver, 
  one application of the latter compensates this difference. 
  The right hand side is always a defined value. 

  \subsection{Mixing bits and signals}

  Function $\gamma$ is not given any particular definition.
  We only assume that it produces a signal such that during the metastability window around cycle $i+1$
  it outputs the value with index $i$ in the bit list. This property is defined by predicate $\bv2sp$:
  \begin{displaymath}
   \begin{array}{l}
     \bv2sp(\gamma, l_u) \equiv  \forall t,i, t \in e_u(i + 1) + ] - t_s + t_h] \rightarrow \gamma(l_u) = l_u[i]    
   \end{array}
 \end{displaymath}  

 Function $\zeta$ takes as input a signal and a time. 
 If the value of the signal at that time is a bit value, this value is returned. Otherwise, 
 a non-deterministic choice is made and {\it some} bit value is returned.
 \begin{displaymath}
   \zeta(s,t) \triangleq \aif s(t) \in \{0,1\}\ \athen s(t)\ \aelse x \in \{0,1\}
 \end{displaymath}
 
 \subsection{Combining two worlds}
 \label{sec:comb-world}

  Let lists $\ce_s$ and $\In_s$ be the bit lists containing values given to the analog sender register
  ${}_aR_s$.
  If they both satisfy predicate $\bv2sp$, list element $\ce_s[c - 1]$ or $\In_s[c-1]$ corresponds
  to the bit value given to the sender analog register at time $e_s(c)$. 

  Proposition~\ref{thm:input-values} is embedded into a digital context in the following statement.
  We assume that (a) clock drift is bounded; (b) function $\gamma$ correctly translates 
  bit lists $\ce_s$ and $\In_s$; (c) the digital control bits are high once and then low $k$ times to create
  a safe sampling window.
  Analog hypotheses are concerned with the connection of the sender with the receiver and the clock drift. 
  Obviously, they cannot be ``digitalized''.
  These assumptions will be used in almost all theorems, lemmas and propositions proved in the remainder of this
  paper. We denote them by~$\H$, which is formally defined as follows:
  \begin{displaymath}
    \H \equiv \left\{ \begin{array}{cl}
             & \Gamma_r \wedge \Gamma_s \wedge n + 1 \leq k \leq  \pi \wedge \cy(\xi,c) \mbox{ (*bounded drift, mk(c)*)}\\
      \wedge & \forall c, \In_r = {}_aR_s(c, \clk_s, \gamma(\ce_s), \gamma(\In_s), \Out_s^0) \wedge \forall t, \ce_r(t) = 1 
      \mbox{(*link*)}\\
      \wedge & \bv2sp(\gamma, ce_s, \clk_s) \wedge \bv2sp(\gamma, \In_s, \clk_s) \mbox{ (*modeling hypotheses*)}\\
      \wedge &  \ce_s[c + \alpha - 1] = 1 \wedge c > 0 
      \wedge  \forall l \in [1:k], \ce_s[c + l - 1] = 0  \mbox{(*sender*)}\\
      \end{array} \right.
  \end{displaymath}

  Under these assumptions, we prove in Proposition~\ref{thm:dig-thm} below that the "digitalized" output of the {\bf analog} receiver register equals 
  the {\bf digital} input of the sender at cycle $c$. 
  Comparing to Proposition~\ref{thm:input-values}, this proposition differs 
  in its hypotheses and its conclusion. As shown above, hypothesis $\H$ mentions primarily digital entities.
  The left hand-side of the conclusion of Proposition~\ref{thm:dig-thm} is the application of conversion function $\zeta$ to 
  the conclusion of Proposition~\ref{thm:input-values}. The right hand-side is a bit instead of a signal.

  \begin{proposition} {\bf Back to the Digital World.}
    \label{thm:dig-thm}
    \begin{displaymath}
      \begin{array}{cl}
             \H
      \rightarrow \forall x \in [0:n]:
                  
                  \zeta ({}_aR^{\xi + \beta_c^\xi + x}_r, e_r(\xi + \beta_c^\xi + x + 1)) = \In_s[c - 1]
      \end{array}
    \end{displaymath}
  \end{proposition}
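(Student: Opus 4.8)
The plan is to treat Proposition~\ref{thm:dig-thm} as a purely ``packaging'' step on top of Proposition~\ref{thm:input-values}: the analog statement already carries all the genuine content, and what remains is only to (i) feed it the analog hypotheses that are hidden inside the digital hypothesis~$\H$ via $\bv2sp$, and (ii) translate its analog conclusion back into bits via $\bv2sp$ and $\zeta$. Concretely, I would instantiate Proposition~\ref{thm:input-values} taking the sender input and control \emph{signals} of that proposition to be $\gamma(\In_s)$ and $\gamma(\ce_s)$, i.e.\ the two signals that actually drive the sender analog register ${}_aR_s$ in~$\H$.

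First I would discharge the hypotheses of Proposition~\ref{thm:input-values}. The bounded-drift conjunct $\Gamma_r \wedge \Gamma_s$, the affected cycle $\cy(\xi,c)$, the bound $n+1 \leq k \leq \pi$, the constraint $c>0$, and the analog-connection clause ($\In_r = {}_aR_s(\dots)$ together with $\forall t,\ \ce_r(t)=1$) appear verbatim in~$\H$. The remaining assumptions are the genuinely digital ones and must be produced from the $\bv2sp$ facts. Since $0$ lies in $]-t_s:t_h]$, the edge time $e_s(c+l)=e_s((c+l-1)+1)$ belongs to the window $e_s(c+l)+\,]-t_s:t_h]$, so $\bv2sp(\gamma,\ce_s,\clk_s)$ gives $\gamma(\ce_s)(e_s(c+l)) = \ce_s[c+l-1]$; this turns the digital control bits $\ce_s[c-1]=1$ and $\ce_s[c+l-1]=0$ (for $l \in [1:k]$) into the analog $\SSW^c_k$ conditions $\gamma(\ce_s)(e_s(c))=1$ and $\gamma(\ce_s)(e_s(c+l))=0$. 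For the two $\stadep$ requirements, $\bv2sp$ states that on each window $e_s(c+l)+\,]-t_s:t_h]$ the signal $\gamma(\In_s)$ is \emph{constantly} the single bit $\In_s[c+l-1] \in \{0,1\}$ (and likewise $\gamma(\ce_s)$ equals $\ce_s[c+l-1]$), which is exactly the ``stable and defined'' assertion $\stadep(e_s(c+l)-t_s, e_s(c+l)+t_h, \gamma(\In_s))$ needed for every $l \in [0:k+1]$.

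With all hypotheses in place, Proposition~\ref{thm:input-values} yields, for each $x \in [0:n]$, the equality ${}_aR^{\xi+\beta_c^\xi+x}_r(e_r(\xi+\beta_c^\xi+x+1)) = \gamma(\In_s)(e_s(c))$. Reapplying $\bv2sp$ at the edge $e_s(c)=e_s((c-1)+1)$ rewrites the right-hand side to the bit $\In_s[c-1]$. Finally I would apply $\zeta$ to both sides: because the raw analog output at that time already equals $\In_s[c-1]$, which is a bit value in $\{0,1\}$, the definition of $\zeta$ returns the signal value unchanged, giving $\zeta({}_aR^{\xi+\beta_c^\xi+x}_r, e_r(\xi+\beta_c^\xi+x+1)) = \In_s[c-1]$, as claimed.

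I expect no deep difficulty, since Proposition~\ref{thm:input-values} supplies all the analog reasoning. The only real care is bookkeeping: matching the $\bv2sp$ windows with the $\stadep$ intervals (both anchored at $e_s(c+l)+\,]-t_s:t_h]$, modulo endpoint conventions) and tracking the index shift between a signal evaluated at edge $e_s(c)$ and the list entry $\In_s[c-1]$. The $\zeta$ step is essentially free, because every value produced by Proposition~\ref{thm:input-values} is already a defined bit, so the non-deterministic branch of $\zeta$ is never reached.
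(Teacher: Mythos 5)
Your proposal is correct and follows essentially the same route as the paper: the paper's proof also consists of using $\bv2sp$ to establish that $\gamma(\In_s)$ and $\gamma(\ce_s)$ are $\stadep$ over the required windows, and then concluding by Proposition~\ref{thm:input-values}. Your version merely spells out the bookkeeping (the index shift between $e_s(c+l)$ and list entry $c+l-1$, and the observation that $\zeta$ acts as the identity on the already-defined output value) that the paper leaves implicit.
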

  \begin{proof}
    By definition of predicate $\bv2sp$, $\gamma(\In_s)$ and $\gamma(\ce_s)$ are $\stadep$
    for the required cycles. Proposition~\ref{thm:input-values} concludes. 
  \end{proof}

\section{A proof example: correct voted bits}
\label{sec:proof-example}
  To prove that bytes are sampled correctly, we need to prove that each sampled bit is 
  correct, i.e., that the value of the voted bit is correct. This is a very simple
  lemma which illustrates the combination of the continuous time model and the 
  discrete time model, as well as the combination of Isabelle/HOL with IHaVeIt and NuSMV. 
  In this section, we first describe how hardware designs are described and verified
  in the IHaVeIt environment. Then, we show how to incorporate 
  \emph{digital} properties in our model of asynchronous communications. 



\subsection{The IHaVeIt environment}

IHaVeIt stands for {\bf I}sabelle {\bf H}ardware {\bf V}erification {\bf I}nfrastructure 
and has been developed by Tverdsyshev~\cite{Tv09}. It is written in Standard ML and 
implemented as an oracle proof tactic in Isabelle/HOL.
IHaVeIt provides a connection to external verification tools: the NuSMV and SMV model-checkers, and different SAT solvers. The environment 
also provides a tool to generate Verilog descriptions that are then synthesized on FPGA.
The main contribution of this tool is an efficient model-reduction algorithm.
This algorithm is based on a combination of transformation and domain reduction techniques.
These techniques provide data reduction and elimination of functions and memories. Details 
on these algorithms fall outside the scope of this paper (See~\cite{Tv09} for more details).

\subsubsection{Hardware description in Isabelle/HOL}

IHaVeIt considers a subset of the Isabelle/HOL syntax that is suitable to describe hardware, i.e., 
descriptions can be translated to Verilog.  The IHaVeIt subset considers the following basic types:
Boolean, bit vectors, naturals, integers, lists, functions, finite enumeration, and records. Infinite types 
are shrunk using predicate sets. IHaVeIt provides a library of predicate sets of the aforementioned basic types,
e.g., $\mathit{bv\_n(n)}$ and $\mathit{arr\_of(n,t)}$ define the sets of bit vectors of length $n$ 
and arrays of $n$ elements of type $t$. 
Combinatorial circuits are represented using Isabelle/HOL expressions, non-recursive functions and 
uninterpreted functions. Functions are constructed using Isabelle/HOL operators.
Uninterpreted functions are translated to Verilog modules without bodies.
They are typically used to represent memories. Sequential circuits
are represented by standard Mealy machines. State components are stored in registers which
constitute a specific type used in the translation tool.

\begin{figure}[ht]
  \centering
  \input{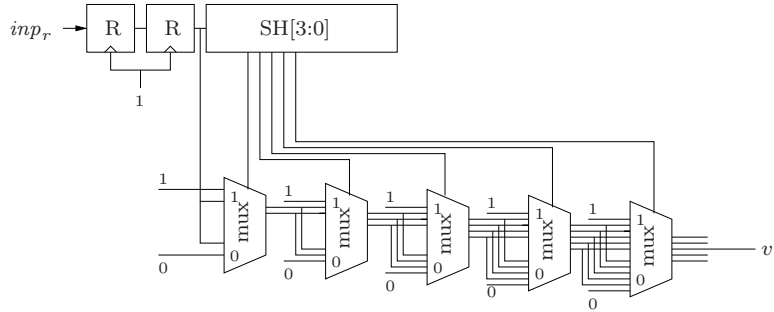}
  \caption{RTL description of majority vote}
  \label{fig:maj}
\end{figure}

We shortly illustrate hardware description in Isabelle/HOL on the very simple example of the computation 
of the voted bit noted $v$ in Figure~\ref{fig:recv1}. Figure~\ref{fig:maj} shows the schematics.
The state component of this circuit is defined by a record containing the two input 
registers and a 4-bit shift register. Using Isabelle syntax, we have\footnote{\texttt{RH} denotes the second input register and \texttt{SH4} the 4-bit shift register in Figure~\ref{fig:maj}.
An 'r' indicates that the element is part of the receiver interface.}:
\begin{verbatim}
record t_rBUSCON = 
      rR    :: t_bitreg
      rRH   :: t_bitreg
      rSH4  :: t_shiftreg4
\end{verbatim}
Majority is computed using a cascade of multiplexers (noted \emph{mux}).
A multiplexer is defined 
as a function taking as arguments two bit-vectors and a select bit.
It returns the selected bit-vector.
\begin{verbatim}
 constdefs mux_impl :: "bv => bv => bit => bv"
    "mux_impl xs ys s == (if (bit2bool s) then xs else ys)"
\end{verbatim}
In computing the majority, multiplexers are used to introduce a 0 or 1.
\begin{verbatim}
 constdefs major_help_impl :: "bv => bit =>bv"
    "major_help_impl b sel == mux_impl (b@[1]) ([0]@b) sel"  
\end{verbatim}
All the multiplexers are connected together to compute the 5-bit majority voting.
\begin{verbatim}
 constdefs major5_impl :: "bv => bit"
    "major5_impl b == 
       let
         v0::bv = [(nth b 0)];
         v1::bv = major_help_impl v0 (nth b 1);
         v2::bv = major_help_impl v1 (nth b 2);
         v3::bv = major_help_impl v2 (nth b 3);
         v4::bv = major_help_impl v3 (nth b 4)
       in 
         (nth v4 2)" 
\end{verbatim}
Finally, the voted bit is computed as the majority of
the four values stored in the shift-register and the value 
stored in the second input register. 
\begin{verbatim}
 constdefs s_v :: "t_rBUSCON => bit"
    "s_v rbuscon == 
        let
          rh_bit::bit = rRH_read_impl rbuscon;
          shift_dout::bv = rSH4_read_impl rbuscon
        in 
          (major5_impl ([rh_bit]@shift_dout))"
\end{verbatim}

\subsubsection{Theorems in IHaVeIt}

IHaVeIt supports the proof of combinatorial properties and temporal properties expressed either in linear or branching time temporal logic (LTL or CTL).
A combinatorial property is a Boolean expression where all free variables are quantified over their subtype, e.g.,
$\forall a \in \mathit{arr\_of(4,bv\_n(8)):}P(a)$. Typically, combinatorial properties are given to a SAT solver or any other kind of decision 
procedures. 
The syntax and semantics
of the LTL and CTL formulas supported by IHaVeIt can be found in Tverdyshev thesis~\cite{Tv09}. 
These formulas correspond to the usual properties described in standard textbooks (e.g., \cite{CGP99,BK08}).
The formalization is inspired by the case-study "Verified Model Checking''
in the Isabelle/HOL tutorial~\cite{IsabelleTutorial}. 

On the circuit computing the majority voting we prove that if the input bit is equal to bit value $b$ for seven clock cycles,
the voted bit equals $b$ for seven cycles with a delay of four cycles. Let $X$ denote the LTL next operator and 
$X^n (P)$ be defined as $X^n (X^{n-1}(P))$. This property is expressed by the following formula:
  \begin{displaymath}
    \begin{array}{ccl}
                           & & \inp_r = b  \wedge X (\inp_r = b ) \wedge X^2(\inp_r = b )\\
                           & \wedge & X^3(\inp_r = b ) \wedge X^4(\inp_r = b ) \wedge X^5(\inp_r = b ) \wedge X^6(\inp_r = b ) \\
\Longrightarrow & \\
                          & &  X^4(v = b ) \wedge  X^5(v = b ) \wedge  X^6(v = b ) \wedge  X^7(v = b ) \wedge  X^8(v = b )  \\
                          & \wedge & X^9(v = b ) \wedge  X^{10}(v = b ) 
    \end{array}
  \end{displaymath}
  Let $\mathit{CorrVotedBit(\inp_r,v)}$ denote this property.
  Let $K = (S,I,T)$ denote the Kripke structure representing the circuit where
  $S$ is the set of states represented by record \texttt{t\_rBUSCON}, $I$ is 
  a predicate defining the set of initial states, and $T$ is the transition function.
  In our case, we make no assumption on the initial states of the registers and $I = \mathit{True}$.
  The transition function is not detailed. It basically applies function \texttt{s\_v} to compute 
  the new value of the voted bit. It shifts all registers by one place to the right and inserts 
  value $\inp_r$ in the first register.
  We prove that property $\mathit{CorrVotedBit}$ holds \emph{always}. 
  \begin{proposition}
    \label{thm:voted-bit-LTL}
    $K \models_{ltl} \Box \mathit{CorrVotedBit(\inp_r,v)}$    
  \end{proposition}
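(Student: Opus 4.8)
The plan is to treat this as a bounded-horizon invariant over a tiny finite-state machine and discharge it by automatic model-checking through IHaVeIt/NuSMV, while keeping at hand a short data-flow-plus-majority argument that explains \emph{why} it holds (and that one would use to debug the encoding should NuSMV report a spurious counterexample). Note first that this is a purely \emph{digital} fact: the analog machinery of Propositions~\ref{thm:transcorr} and~\ref{thm:input-values} plays no role here. Since no constraint is placed on the initial states ($I=\mathit{True}$), proving $\Box\,\mathit{CorrVotedBit}$ amounts to checking, from \emph{every} state, the bounded implication whose deepest temporal operator is $X^{10}$. The state space is minuscule --- the record \texttt{t\_rBUSCON} holds only the two one-bit input registers and the four-bit shift register, i.e.\ six state bits and hence $64$ states --- so once the Kripke structure $K=(S,\mathit{True},T)$ and the formula are expressed in the IHaVeIt syntax, NuSMV settles the property essentially instantaneously by exhaustive exploration.

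To see why it holds I would unfold the transition function $T$ to express the five operands fed to \texttt{major5\_impl} at step $t$ in terms of the input stream and the initial register contents. Writing $\inp_r^{j}$ for the input $j$ steps ahead, the data path $\inp_r\to R\to\mathtt{RH}\to\mathtt{SH4}[0]\to\cdots\to\mathtt{SH4}[3]$ shows that the five voted taps correspond to delays $2,3,4,5,6$, so for large enough $t$ they are $\inp_r^{t-2},\inp_r^{t-3},\inp_r^{t-4},\inp_r^{t-5},\inp_r^{t-6}$, while for small $t$ some of them trace back to the \emph{uninitialised} contents $R^{0}$ and $\mathtt{RH}^{0}$, which the antecedent leaves free. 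Under the hypothesis $\inp_r^{0}=\cdots=\inp_r^{6}=b$ I would then, for each $t\in\{4,\dots,10\}$, count how many of the five source indices fall inside $\{0,\dots,6\}$: this count is $3$ at the endpoints $t=4$ and $t=10$ and at least $4$ in between, hence always $\ge 3$. Since \texttt{major5\_impl} returns $b$ as soon as three of its five operands equal $b$, the voted bit is $b$ for exactly the seven cycles $t=4,\dots,10$, which is the conclusion.

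The delicate points --- and what I expect to be the crux --- are the two boundary cycles $t=4$ and $t=10$. At $t=4$ two operands are the uninitialised register contents $R^{0},\mathtt{RH}^{0}$; at $t=10$ two operands are the future inputs $\inp_r^{7},\inp_r^{8}$, on which the antecedent says nothing. Here the naive reading of the circuit as a pure four-cycle delay of a seven-bit window breaks down, and one must genuinely invoke the error-masking of the five-vote: three forced copies of $b$ outvote two arbitrary bits, so the majority locks onto the median tap (delay four) and tolerates corruption at the two extreme taps. This is exactly why a window of seven stable inputs yields a full seven correct votes at delay four, rather than fewer, and it is the only place in the argument that is not mere unfolding. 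In the machine-checked development this robustness is what NuSMV verifies over all $64$ states, so the sole remaining obstacle is a \emph{modelling} one: the faithful encoding of $T$ and of the $X^{k}$-formula in IHaVeIt, not a mathematical one.
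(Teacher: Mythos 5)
Your proposal matches the paper's proof, which simply discharges the property automatically via IHaVeIt's model reduction and NuSMV; the only detail you omit is that the paper handles the free bit $b$ by case-splitting on its two possible values to reduce the state space. Your additional hand analysis of the tap delays and the 3-out-of-5 majority at the boundary cycles $t=4$ and $t=10$ is a correct explanation of \emph{why} the model checker succeeds, but it is not part of the paper's (fully automatic) proof.
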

  \begin{proof}
    This property is automatically proven by IHaVeIt that applies model reduction and calls NuSMV. 
    To reduce the state space, we prove it for each possible values of $b$ ($0$ or $1$). 
  \end{proof}

  To use the above temporal property we translate it back to usual logic using the 
  semantic description in Isabelle/HOL. Globally ($\Box$) means that the property holds for all positions
  of all traces. Let $t$ denote an arbitrary position in a trace. Then, 
  $\Box \mathit{CorrVotedBit(\inp_r,v)}$ translates to $\forall t. \mathit{CorrVotedBit(\inp_r^t,v^t)}$.
  Let $\Box X^n (s)$ translate to $\forall t. s^{t + n}$. Then, Proposition~\ref{thm:voted-bit-LTL}
  translates to Proposition~\ref{thm:voted-bit} below:
 \begin{proposition}
    \label{thm:voted-bit}
    $\forall x \in [0:6]: \inp_r^{t + x} = b \rightarrow  v^{t + x + 4} = b$
  \end{proposition}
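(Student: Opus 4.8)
The plan is to derive Proposition~\ref{thm:voted-bit} as the ordinary-logic reading of the model-checked LTL statement, Proposition~\ref{thm:voted-bit-LTL}, rather than to reason about the majority-vote circuit from scratch. Since Proposition~\ref{thm:voted-bit-LTL} has already been discharged automatically by IHaVeIt/NuSMV, the only work left is to unfold the temporal semantics into the signal notation $s^t$. First I would observe that because $I = \mathit{True}$ and the transition function $T$ is total --- every state has a successor and every input stream is admissible --- the set of traces of $K$ coincides with the set of all behaviours of the circuit, so the concrete signal streams $\inp_r$ and $v$ constitute a trace of $K$. Consequently $K \models_{ltl} \Box\, \mathit{CorrVotedBit}(\inp_r, v)$ specialises to the assertion that $\mathit{CorrVotedBit}$ holds at every position $t$ of this trace.

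Next I would apply the two translation rules stated just before the proposition. Reading $\Box P$ as $\forall t.\, P^t$ turns the outer $\Box$ into a universal quantifier over positions, and reading $\Box X^n(s)$ as $\forall t.\, s^{t+n}$ lets me evaluate each nested next-operator at the shifted position: at position $t$ the input conjunct $X^{k}(\inp_r = b)$ becomes $\inp_r^{t+k} = b$ and the voted-bit conjunct $X^{4+k}(v = b)$ becomes $v^{t+4+k} = b$. Collecting the seven input conjuncts and the seven voted-bit conjuncts ($k = 0,\dots,6$) reproduces, for every $t$, the implication
\begin{displaymath}
\Bigl(\forall x \in [0:6]:\; \inp_r^{t+x} = b\Bigr) \;\rightarrow\; \Bigl(\forall x \in [0:6]:\; v^{t+x+4} = b\Bigr),
\end{displaymath}
which is exactly the content of Proposition~\ref{thm:voted-bit}.

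The one point that requires care --- and the place where a naive reading could go wrong --- is the \emph{scope} of the quantifier on $x$. The statement must not be read as a per-index implication from a single input bit to a single voted bit: one input sample cannot force the outcome of a five-input majority vote. The hypothesis is the full seven-cycle window $\forall x \in [0:6]:\,\inp_r^{t+x} = b$, and it is this shared window that licenses the shared conclusion $\forall x \in [0:6]:\,v^{t+x+4} = b$. Once the window assumption is in place, each conclusion conjunct follows immediately from the corresponding conjunct of the already-proven $\mathit{CorrVotedBit}$ instance. Beyond this bookkeeping there is no genuine difficulty; the robustness built into the majority vote --- for each $x$, at least three of the five sampled copies fall inside the window --- is precisely what the model checker has already verified, so no further circuit-level argument is needed here.
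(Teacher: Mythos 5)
Your proposal is correct and matches the paper's proof, which likewise obtains Proposition~\ref{thm:voted-bit} purely by translating the already model-checked Proposition~\ref{thm:voted-bit-LTL} through the stated LTL semantics ($\Box P \mapsto \forall t.\,P^t$ and $X^n$ as an index shift). Your remark that the quantifier on $x$ must scope over the whole seven-cycle window rather than give a per-index implication is a sensible clarification of the statement's intended reading, consistent with the form of $\mathit{CorrVotedBit}$, but it does not change the route.
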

  \begin{proof}
    This proposition is the translation of Proposition~\ref{thm:voted-bit-LTL} according to the LTL semantics. 
  \end{proof}

  In the next section, we show how to combine this property with our model of asynchronous communications and connect
  with the sender unit. 

\subsection{A simple example: correct voted bits}
\label{ex:voted-bits}

\begin{figure}[ht]
  \centering
  \input{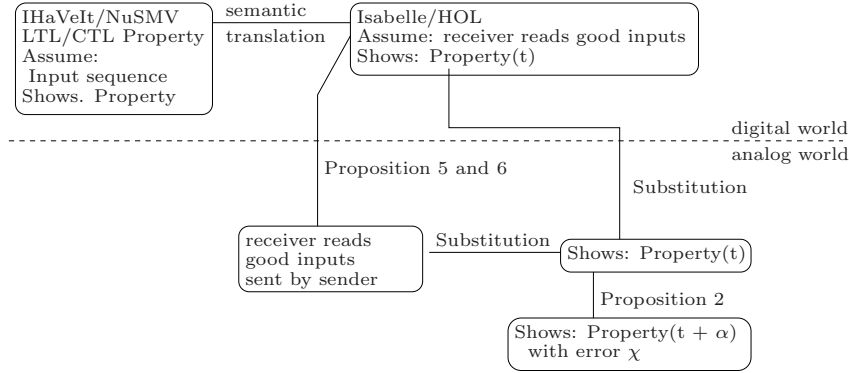}
  \caption{Proof method}
  \label{fig:proof-method}
\end{figure}


  Our general proof method is illustrated in Figure~\ref{fig:proof-method}.
  The first step is to prove a temporal property on the design. In the example of the voted bits, 
  this was done in Proposition~\ref{thm:voted-bit-LTL}. The structure of this property is very 
  typical in our proof. We always prove a property on part of the design assuming a specific input sequence.
  The second step is to translate this temporal property in the logic of Isabelle/HOL using the 
  semantics. This was the purpose of Proposition~\ref{thm:voted-bit}. The property is now 
  dependent of an arbitrary position $t$ in traces. Until now, the \emph{digital} part 
  of the \emph{receiver} unit was analyzed. We now show how to include the sender 
  and our model of asynchronous communications.

  We recall Figure~\ref{fig:mixed-scheme} showing the connection of sender, our model of asynchronous communications,
  and receiver. Formally, this connection is expressed by Equation~\ref{eq:mix-ad}.
  In Proposition~\ref{thm:voted-bit}, we assume that the receiver reads an arbitrary bit value $b$.
  In reality, this bit value is put on the bus by the sender unit at cycle $c$. The receiver is then expected 
  to read the sender output register, i.e., $\out_s^c$ in Figure~\ref{fig:mixed-scheme}.
  The third step of our proof method is to show that the receiver can indeed read this value 
  in our model of asynchronous communications. In our example of the voted bits, 
  we have to show that the receiver can read seven copies of each bit. This is the case 
  if the sender creates a safe sampling window. This is exactly the statement of Propositions~\ref{thm:input-values}
  and~\ref{thm:dig-thm}. Using these propositions, we can discharge the assumptions of our property 
  in the analog world.
  In the FlexRay algorithm, $k$ and $n$ have values 7 and 6 in Proposition~\ref{thm:dig-thm}. 
  The hypotheses obtained by instantiating $k$ and $n$ in $\H$ with these values is noted 
  $\H[k \triangleleft 7,n \triangleleft 6]$. Finally, we obtain the following proposition:
  \begin{proposition}
      \label{eq:known-input}
      $\H[k \triangleleft 7,n \triangleleft 6] \wedge \mbox{Equation ~\ref{eq:mix-ad}}
      \rightarrow \forall x \in [0:6]: \inp_r^{\xi + \beta_c^\xi + x} = \out_s^c$
  \end{proposition}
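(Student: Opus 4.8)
The plan is to obtain this proposition as a direct composition of Proposition~\ref{thm:dig-thm} with the analog--digital connection of Equation~\ref{eq:mix-ad}, together with a bookkeeping identification of the sender's digital output with the bit list feeding the analog sender register. First I would instantiate Proposition~\ref{thm:dig-thm} with the substitutions $k \triangleleft 7$ and $n \triangleleft 6$. Since the hypothesis of that proposition is exactly $\H$, and we are handed $\H[k \triangleleft 7, n \triangleleft 6]$, this yields immediately $\forall x \in [0:6]: \zeta({}_aR^{\xi + \beta_c^\xi + x}_r, e_r(\xi + \beta_c^\xi + x + 1)) = \In_s[c-1]$. This already delivers the right number of ``good'' samples, $x = 0,\dots,6$, matching the FlexRay choice of seven stable copies.

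Next I would fix an arbitrary $x \in [0:6]$ and rewrite the left-hand side into digital form. Instantiating the assumed Equation~\ref{eq:mix-ad} at the digital cycle $c := \xi + \beta_c^\xi + x$ gives $\inp_r^{\xi + \beta_c^\xi + x} = \zeta({}_aR_r^{\xi + \beta_c^\xi + x}, e_r(\xi + \beta_c^\xi + x + 1))$, whose right-hand side is precisely the term appearing in the conclusion of Proposition~\ref{thm:dig-thm}. Chaining the two equalities therefore produces $\inp_r^{\xi + \beta_c^\xi + x} = \In_s[c-1]$.

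The remaining step, and the only one requiring real care, is to identify $\In_s[c-1]$ with the digital sender output $\out_s^c$. Here I would appeal to the setting of Figure~\ref{fig:mixed-scheme}: the bit list $\In_s$ is by construction the sequence of digital sender outputs fed through $\gamma$ into the analog sender register ${}_aR_s$, and the discussion preceding the definition of $\H$ already records that, under the $\bv2sp$ assumption (which is a conjunct of $\H$), the list element $\In_s[c-1]$ is the bit supplied to ${}_aR_s$ at time $e_s(c)$ --- that is, the sender output during digital cycle $c$, namely $\out_s^c$. The main obstacle is thus not any analog reasoning (all of it is discharged by Proposition~\ref{thm:dig-thm} and, underneath it, Propositions~\ref{thm:affcy} and~\ref{thm:transcorr}) but rather getting this index offset exactly right, so that the $(c-1)$-th list entry is matched with the $c$-th application of the sender transition function. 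Once that identification is in place, substituting back gives $\inp_r^{\xi + \beta_c^\xi + x} = \out_s^c$ for the fixed $x$, and since $x \in [0:6]$ was arbitrary the universally quantified conclusion follows.
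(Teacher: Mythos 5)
Your proposal is correct and follows the same route as the paper, whose proof is exactly ``Follows directly from Proposition~\ref{thm:dig-thm} and Equation~\ref{eq:mix-ad} by substitution''; you merely spell out the substitution and the bookkeeping identification of $\In_s[c-1]$ with $\out_s^c$ that the paper leaves implicit.
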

  \begin{proof}
    Follows directly from Proposition~\ref{thm:dig-thm} and Equation~\ref{eq:mix-ad} 
    by substitution. 
  \end{proof}
  By combining this Proposition with Proposition~\ref{thm:voted-bit}, 
  we obtain the correctness of the voted bit assuming asynchronous 
  transmission, metastability, and clock drift.
  \begin{proposition}
    \label{prop:voted-bit1}
    $\H[k \triangleleft 7,n \triangleleft 6] \wedge \mbox{Equation ~\ref{eq:mix-ad}}
    \rightarrow \forall x \in [4:10]: v^{\xi + \beta_c^\xi + x} = \out_s^c$
  \end{proposition}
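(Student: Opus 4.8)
The plan is to compose the two results that have already done the real work and observe that nothing analog remains to be proved: Proposition~\ref{eq:known-input} carries the sender's digital output bit across the model of asynchronous communications, and Proposition~\ref{thm:voted-bit} is the purely digital correctness of the five-majority circuit of Figure~\ref{fig:maj}. The argument is therefore just an instantiation followed by a reindexing, exactly the ``combination'' the surrounding text announces.

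First I would instantiate Proposition~\ref{thm:voted-bit}, whose bit value $b$ and trace position $t$ are both universally quantified, by taking $b := \out_s^c$ and $t := \xi + \beta_c^\xi$. These substitutions are legitimate: $\out_s^c \in \{0,1\}$ is a genuine digital bit, and since $\beta_c^\xi \in \{0,1\}$ the index $\xi + \beta_c^\xi$ is a well-formed receiver cycle. The (conjunctive) premise of the instantiated proposition then demands $\inp_r^{\xi + \beta_c^\xi + x} = \out_s^c$ for every $x \in [0:6]$ -- i.e. that the receiver input is stable at the sender's bit for seven consecutive cycles starting at $\xi + \beta_c^\xi$. This is precisely the conclusion of Proposition~\ref{eq:known-input}, which holds under $\H[k \triangleleft 7,n \triangleleft 6]$ together with Equation~\ref{eq:mix-ad}, the very hypotheses we are assuming. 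Discharging the premise this way and applying Proposition~\ref{thm:voted-bit} yields $v^{\xi + \beta_c^\xi + x + 4} = \out_s^c$ for all $x \in [0:6]$.

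Finally I would reindex the conclusion. Writing $y = x + 4$ sends the range $x \in [0:6]$ to $y \in [4:10]$, so the statement just obtained is literally $\forall y \in [4:10]:\ v^{\xi + \beta_c^\xi + y} = \out_s^c$, which is the claim. The only points needing any care are bookkeeping rather than mathematics: one must instantiate the arbitrary bit $b$ to the same value $\out_s^c$ in both ingredients, and one must make sure the four-cycle latency of the synchronizer-plus-majority pipeline (the shift by $4$ inherited from the LTL formula \emph{CorrVotedBit}) is matched correctly against the seven stable input cycles supplied by Proposition~\ref{eq:known-input}. Because seven stable inputs are exactly what the digital property consumes, the two ranges line up with nothing to spare, and there is no genuine obstacle beyond this index-matching.
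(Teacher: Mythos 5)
Your proof is correct and follows essentially the same route as the paper, which likewise just combines Proposition~\ref{eq:known-input} (discharging the seven-cycle input premise) with Proposition~\ref{thm:voted-bit} and reindexes by the four-cycle latency. The only cosmetic difference is that the paper phrases the substitution as ``$t$ by $\xi + \beta_c^\xi + 4$'' (pointing at the first good voted bit) whereas you instantiate $t := \xi + \beta_c^\xi$ and shift afterwards; these are the same bookkeeping.
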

  \begin{proof}
    Follows from Propositions~\ref{eq:known-input} and~\ref{thm:voted-bit}, by 
    substituting $t$ by $\xi + \beta_c^\xi + 4$. 
  \end{proof}
  
  Typically, we obtain at this point a formula with a time reference at receiver cycle $\xi$ (or sometimes simply $t$).
  This reference corresponds to the mark, i.e., the association of a receiver cycle with a sender cycle (e.g., $\cy(\xi,c)$).
  We are often interested in generalizing this formula to subsequent marks at a distance $\alpha \leq \pi$ from the known one. 
  This is exactly the purpose of Proposition~\ref{thm:reg-affw}. 
  In the proof of the correct voted bit, we obtain the following proposition:
  \begin{proposition}
    \label{prop:voted-bit2}
    \begin{displaymath}
     \H[k \triangleleft 7,n \triangleleft6] \wedge \alpha \leq \pi \wedge \mbox{Equation ~\ref{eq:mix-ad}}
    \rightarrow 
    \bigvee_{\chi \in \{-1,0,1\} } \forall x \in [4:10]: v^{\xi + \beta_c^{\xi +\theta} + x + \theta} = \out_s^{c + \alpha} 
    \end{displaymath}
    where $\theta = \alpha + \chi$
  \end{proposition}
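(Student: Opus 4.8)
The plan is to lift Proposition~\ref{prop:voted-bit1}, which is anchored at the single mark $\cy(\xi,c)$, to the mark that the sender produces $\alpha$ cycles later, letting the disjunction over the drift factor $\chi$ be supplied verbatim by Proposition~\ref{thm:reg-affw}. First I would dispose of the degenerate case $\alpha = 0$: there the claim collapses to Proposition~\ref{prop:voted-bit1} itself with $\chi = 0$ (so $\theta = 0$), and nothing remains to prove. For $0 < \alpha \leq \pi$ I would extract from $\H$ the jitter bounds $\Gamma_r$ and $\Gamma_s$ together with the base mark $\cy(\xi,c)$, and feed them, along with the hypothesis $\alpha \leq \pi$, to Proposition~\ref{thm:reg-affw}. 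This yields $\bigvee_{\chi \in \{-1,0,1\}} \cy(\xi + \alpha + \chi, c + \alpha)$, i.e.\ for some $\chi$ the receiver cycle $\xi + \theta$ (with $\theta = \alpha + \chi$) is precisely the cycle marked by sender cycle $c + \alpha$.

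The central step is then to re-instantiate Proposition~\ref{prop:voted-bit1} at this relocated mark, substituting $\xi \mapsto \xi + \theta$ and $c \mapsto c + \alpha$ (so that the shifted instance carries a displacement of $0$, since the enable must now be high exactly at cycle $c + \alpha$). For this to be legitimate I must reconstruct the hypothesis bundle $\H[k \triangleleft 7, n \triangleleft 6]$ for the shifted parameters. The analog conjuncts --- $\Gamma_r$, $\Gamma_s$, the analog connection, and the modeling predicates $\bv2sp$ --- are independent of which mark is inspected, so they transfer verbatim. The sender-side conjuncts of the original $\H$ already place the rising enable at $\ce_s[c+\alpha-1]=1$ and force $\ce_s$ low for the following $k$ cycles; these are exactly the conditions creating a safe sampling window at sender cycle $c + \alpha$, which is the window probed by the relocated mark $\cy(\xi+\theta, c+\alpha)$. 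The shifted-mark conjunct is furnished by the $\chi$-branch selected above. Hence the instance of Proposition~\ref{prop:voted-bit1} delivers, inside that branch, $\forall x \in [4:10]: v^{(\xi+\theta) + \beta_{c+\alpha}^{\xi+\theta} + x} = \out_s^{c+\alpha}$, which after reassociating $\xi + \theta + \cdots$ is the stated conclusion.

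The main obstacle I anticipate is the bookkeeping around the metastability factor. The natural instantiation produces the factor $\beta_{c+\alpha}^{\xi+\theta}$ attached to the relocated mark, whereas the conclusion as written carries $\beta_c^{\xi+\theta}$; these agree only through the mark-relocation and not as a pointwise identity, so the argument must keep $\beta$ indexed by the sender cycle actually being sampled and verify that this is what the conclusion intends. The only remaining care is aligning the three branches: the disjunction emitted by Proposition~\ref{thm:reg-affw} must be matched index-for-index with the $\bigvee_{\chi}$ asserted in the goal, pushing the chosen $\chi$ through $\theta = \alpha + \chi$ into both the receiver-cycle offset and the superscript of $\beta$. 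Once the branches are matched no further analog reasoning is needed, since Propositions~\ref{thm:reg-affw} and~\ref{prop:voted-bit1} between them already absorb all of the jitter, drift, and metastability.
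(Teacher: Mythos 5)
Your proposal matches the paper's proof exactly: Proposition~\ref{thm:reg-affw} supplies the three possible marks $\cy(\xi+\alpha+\chi, c+\alpha)$, and Proposition~\ref{prop:voted-bit1} is instantiated at each relocated mark to yield the corresponding disjunct. Your extra observations --- the degenerate case $\alpha=0$ and the fact that the metastability factor should really be indexed as $\beta_{c+\alpha}^{\xi+\theta}$ (as it indeed is when this proposition is later instantiated in Section~\ref{sec:lemma2}) --- are sound refinements of what the paper leaves implicit.
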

  \begin{proof}
    Proposition~\ref{thm:reg-affw} gives us three possible marks, one for each value of $\chi$. 
    For each one of them, we instantiate Proposition~\ref{prop:voted-bit1}. 
  \end{proof}

  \section{Formal proof}
  \label{sec:proof}

  In this section we give details on the formal correctness proof of the time-triggered 
  interface. We focus on the receiver correctness and therefore assume correctness 
  on the sender part. We first precise this assumption. Then, we give an overview of the global proof structure.
  After that, we give our correctness statement
  and show the proof of two important lemmas: Lemma~\ref{lemma1} and Lemma~\ref{lemma2}.
  Lemma~\ref{lemma1} shows the possible states in which the receiver can be after reading the 
  synchronization sequence $\BSS$. Lemma~\ref{lemma2} extends this result to show that 
  for all these possible states bytes can be sampled correctly. 
  Theorem~\ref{transcorr} -- the main correctness theorem of the hardware interface -- is proven 
  by induction on the number of bytes in a message. We conclude this Section by showing
  how Lemma~\ref{lemma2} is used in the induction step. 

  \subsection{Assumptions: sender correctness}

  The sender is proven to effectively generate each bit for eight clock cycles. 
  This discharges the digital hypotheses of Proposition \ref{thm:dig-thm}. 
  Formally, this is defined as follows, where $l$ denotes the number of bytes in a message:
  \begin{definition} {\bf Correctness of $\ce_s$.}
    \begin{displaymath}
      \begin{array}{lcl}
        \WF_\ce(\ce_s, l, k, c) & \equiv & \forall i < l, \ce_s[c + 8 \cdot i] = 0 
        \wedge 
        \forall j \in [1:k], \ce_s[c + 8 \cdot i + j] = 0 
      \end{array}
    \end{displaymath}  
  \end{definition}

  We prove that the sender generates frames with the specified format.
  For the purpose of this paper, we are only concerned with synchronization bits, i.e., the $\BSS$ sequence. 
  This is expressed by the following predicate, where $l$ denotes the number of bytes in a message:
  \begin{definition} {\bf Partial Correctness of $\In_s$.}
    \begin{displaymath}
      \begin{array}{ll}
        \WF_\In(\In_s,l,c) \equiv & \forall i<l, \forall y \in [0:7] 
        \left\{ 
          \begin{array}{cl}
            & \In_s[c + 80 * i + 16 + y - 1] = 1 \\ 
            \wedge & \In_s[c + 80 * i + 24 + y - 1] = 0          
          \end{array} \right.
      \end{array}
    \end{displaymath} 
  \end{definition}

 \subsection{Proof roadmap}
 \label{sec:proof-roadmap}

  Senders send frames composed by two initial bit sequences -- the transmission start sequence ($\TSS$) and the frame start sequence ($\FSS$) --
  followed by a number of bytes, say $l$ bytes.
  Before sending each byte a synchronization sequence -- the falling edge made of $\BSS[0]$ and $\BSS[1]$ --
  is sent. Our objective is to prove that each byte is received properly. We take as reference 
  the time point when a receiver reads the first copy of bit $\BSS[0]$.
  In this section, this time point is referred to with $t$.
  When reading the initial bit sequences or the previous byte, receivers have different configurations when starting reading a byte 
  at time $t$. 
  We first show the different possible states after reading the initial bit sequences, i.e., the states before reading the first byte.

  \begin{figure}[ht]
    \centering
    \input{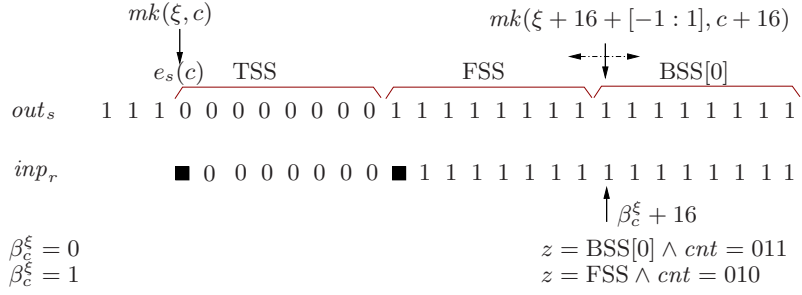}
    \caption{Initial Transmission Phase}
    \label{fig:transmit}
  \end{figure}

  We illustrate these different possible states in Figure~\ref{fig:transmit}. 
  The first two lines show the output of the sender and how it is seen by the receiver.
  Black boxes indicate possible metastabilities.
  We first need a mark and assume that cycle $\xi$ is the first affected cycle. 
  Because of clock drift, the $\BSS[0]$-mark may appear on the receiver side
  15 ($\chi = -1 $), 16 ($\chi = 0 $) or 17 ($\chi = 1 $) cycles after $\xi$. There is a potential metastability at cycle $\xi$.
  Depending on the value reached after resolution --~that is depending on the value of 
  $\beta_c^\xi$~-- the receiver automaton reaches different state and counter values
  when the $\BSS[0]$-mark is detected. In the figure, we show these values at $\beta_c^\xi + 16$, where 
  the automaton is either in state $\BSS[0]$ with a counter at $011$ or in state $\FSS$ with a counter at $010$.
  This corresponds to the case where $\chi = 0$. If $\chi = 1$, then the receiver automaton reaches 
  state $\BSS[0]$ 17 cycles after $\xi$ and with counter value $100$.


  The base case and all the sub-cases of the induction step are proven
  using two main lemmas. The first one states that synchronization 
  occurs while sampling the synchronization sequence. It shows for all possible
  states of the receiver before reading the $\BSS$ sequence which are 
  the possible states after reading this sequence.
  The second 
  lemma shows that this synchronization is good enough to sample
  a byte. It shows for all possible states after reading the $\BSS$ sequence
  the byte can be sampled properly.
  Its proof shows the correct value of the counter used
  in the bit-clock synchronization algorithm.

\subsection{Main statement}
\label{sec:proof:main}

 The main theorem is shown below. The first line contains hypotheses about 
 the low level aspects ($\H$), our integration of the analog and the digital worlds (Equation~\ref{eq:mix-ad}),
 and our assumption that the sender is correct. 
 The first bit of the message is 
 put on the bus as sender cycle $c$ and the first ``affected" receiver cycle is cycle $\xi$.
 These two facts are hidden in hypothesis $\H[k \triangleleft 7,n \triangleleft 6]$.
 In the second line, we simply assume that the initial state is 
 $\mathit{idle}$ and that each bit of a byte is indexed by $j$.

 \begin{theorem} \textbf{Transmission Correctness.}
   \label{transcorr}
   \begin{displaymath}
     \begin{array}{ccl}
       & &   \H[k \triangleleft 7,n \triangleleft 6]  
       \wedge \mbox{Equation ~\ref{eq:mix-ad}} \wedge \WF_\ce(\ce_s, l, k, c) \wedge \WF_\In(\out_s,l,c) \\
       & \wedge &  z^\xi = \mathit{idle} \wedge j \in [7:0]\\
       \rightarrow & & \\
       \forall i < l,& \exists \nu, &
                        \cy(\nu, c + 16 + 80 \cdot i) \mbox{ (* bss0 mark *)} \\ 
        \wedge\\ &        & \cy(\nu + 7, c + 24 + 80 \cdot i) \mbox{(*bss1 mark*)}\\
               &        & z^{\nu + 78} = b[7] \wedge \cnt^{\nu + 78} = 010 
                          \wedge \BYTE^{\nu + 79} = \tup{\out_s^{c + 16 + 80 \cdot i + 8 \cdot (j + 2)}} \\ 
               &  \vee  & \\
               &        & (\cy(\nu + 7, c + 24 + 80 \cdot i) \vee \cy(\nu + 8, c + 24 + 80 \cdot i)) \mbox{(*bss1 mark*)}\\
               &        & z^{\nu + 79} = b[7] \wedge \cnt^{\nu + 79} = 010 
                          \wedge \BYTE^{\nu + 80} = \tup{\out_s^{c + 16 + 80 \cdot i + 8 \cdot (j + 2)}} \\ 
               &  \vee  & \\
               &        & (\cy(\nu + 8, c + 24 + 80 \cdot i) \vee \cy(\nu + 9, c + 24 + 80 \cdot i)) \mbox{(*bss1 mark*)}\\
               &        & z^{\nu + 80} = b[7] \wedge \cnt^{\nu + 80} = 010 
                          \wedge \BYTE^{\nu + 81} = \tup{\out_s^{c + 16 + 80 \cdot i + 8 \cdot (j + 2)}} \\ 
               &  \vee  & \\
               &        & \cy(\nu + 9, c + 24 + 80 \cdot i) \mbox{(*bss1 mark*)}\\
               &        & z^{\nu + 81} = b[7] \wedge \cnt^{\nu + 81} = 010 
                          \wedge \BYTE^{\nu + 82} = \tup{\out_s^{c + 16 + 80 \cdot i + 8 \cdot (j + 2)}} \\ 
      \end{array}
    \end{displaymath}
  \end{theorem}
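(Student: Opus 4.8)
The plan is to prove the theorem by induction on the byte index $i < l$, carrying an invariant that simultaneously fixes the \emph{analog} position of the receiver -- the mark $\cy(\nu, c + 16 + 80\cdot i)$ relating a receiver cycle $\nu$ to the sender cycle at which the $i$-th $\BSS$ sequence begins -- and its \emph{digital} state, namely the automaton state $z$, the counter $\cnt$, and the stored byte register $\BYTE$. The four disjuncts of the conclusion are exactly the four combined errors produced by the drift factor $\chi \in \{-1,0,1\}$ together with the metastability factor $\beta_c^\xi \in \{0,1\}$, whose sum ranges over $\{-1,0,1,2\}$; this is visible in the four-cycle spread of the strobe point from $\nu+78$ to $\nu+81$ and the three-cycle spread of the $\BSS[1]$ mark from $\nu+7$ to $\nu+9$. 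Accordingly, the induction hypothesis for byte $i$ is precisely this four-way disjunction, and each incoming disjunct must be shown to re-establish one of the four outgoing disjuncts for byte $i+1$.

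First I would handle the base case, the initial transmission phase pictured in Figure~\ref{fig:transmit}. Starting from $z^\xi = \idle$ and the single mark $\cy(\xi,c)$ supplied by $\H$, I would apply Proposition~\ref{thm:reg-affw} to propagate the mark across the $\TSS$ and $\FSS$ sequences to the first $\BSS[0]$ at sender cycle $c + 16$, obtaining $\cy(\xi + 16 + \chi, c + 16)$ for $\chi \in \{-1,0,1\}$. Combined with the case split on $\beta_c^\xi$, this yields exactly the receiver configurations shown in the figure (state $\BSS[0]$ or $\FSS$ with counter $011$, $010$, or $100$ near cycle $\beta_c^\xi + 16$). The digital inputs read along the way are discharged through Proposition~\ref{thm:dig-thm}, whose hypotheses on $\ce_s$ and $\In_s$ are furnished by $\WF_\ce(\ce_s, l, k, c)$ and $\WF_\In(\out_s,l,c)$.

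Second, the inductive step reduces to the two lemmas. Lemma~\ref{lemma1} is the synchronization lemma: from any of the admissible pre-$\BSS$ configurations it shows that detecting the falling edge of $\BSS$ resets the counter and lands the receiver in one of the admissible post-$\BSS$ states, re-synchronizing it with the sender irrespective of the error accumulated over the previous byte. Lemma~\ref{lemma2} then shows that from each such post-$\BSS$ state the entire byte is strobed correctly; here Statements~\ref{statement1} and~\ref{statement2} pin down the exact counter values, establishing that correctness holds precisely when the strobe-minus-reset difference lies between one and three cycles. I would chain these two lemmas and then apply Proposition~\ref{thm:reg-affw} once more to transport the mark to the next $\BSS$ sequence, $80$ cycles later -- a distance the drift bound $\pi$ must accommodate -- producing the $\BSS[0]$ mark $\cy(\nu', c + 16 + 80\cdot(i+1))$ needed to close the induction.

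The hard part is that the digital and analog facts cannot be decoupled: whether the counter reaches $010$ at the intended analog instant depends on drift and metastability, while the counter reset that corrects that drift is itself triggered by the \emph{digitally} detected $\BSS$ edge, which in turn relies on the analog sampling being correct. Consequently both lemmas must carry a joint digital--analog invariant and be discharged together, and the argument must verify that across every one of the four error cases the reset-to-strobe distance keeps the strobe point inside the seven guaranteed ``good'' copies provided by Proposition~\ref{thm:input-values}. Managing this case explosion -- four incoming error states, each flowing through re-synchronization to a definite outgoing error state -- while respecting the crucial counter parameter is the technical core of the proof.
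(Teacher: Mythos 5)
Your overall strategy is the paper's: induction on the byte index with the four-way disjunction as the invariant, a base case that propagates the initial mark across $\TSS$ and $\FSS$ via Proposition~\ref{thm:reg-affw}, and an inductive step that chains Lemma~\ref{lemma1} and Lemma~\ref{lemma2} and transports the mark $80$ cycles forward. There is, however, one concrete place where your plan as written would fail. You describe the inductive step as ``four incoming error states, each flowing through re-synchronization to a definite outgoing error state,'' i.e., you assume every combination of the transported $\BSS[0]$-mark (three positions $\nu+79$, $\nu+80$, $\nu+81$ given by Proposition~\ref{thm:reg-affw}) with the four completion times of byte $i$ can be fed into Lemma~\ref{lemma2}. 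That is false. When the new mark is early ($\nu+79$) but byte $i$ completed latest ($\cnt^{\nu+81}=010$), the counter at the mark is still $000$ with the automaton still in $b[7]$, which is outside the admissible premises of Lemma~\ref{lemma2}; symmetrically, a late mark ($\nu+81$) combined with the earliest completion ($\cnt^{\nu+78}=010$) leaves the counter overrun to $101$ in state $\BSS[0]$. These two of the twelve cases cannot be discharged by the lemma; they must instead be shown \emph{infeasible}.

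The paper eliminates them by exploiting the $\BSS[1]$-mark carried in the induction hypothesis: applying Proposition~\ref{thm:reg-affw} with $\alpha = 72$ to $\cy(\nu+9, c+24+80\cdot i)$ (resp.\ to $\cy(\nu+7, c+24+80\cdot i)$) forces the next $\BSS[0]$-mark to lie at $\nu+80$ or later (resp.\ at $\nu+80$ or earlier), contradicting the assumed early (resp.\ late) position. This correlation between the drift already consumed while sampling byte $i$ and the drift available for positioning the next mark is the entire reason the $\BSS[1]$-mark appears in the theorem's conclusion and hence in the induction hypothesis. Your proposal carries the $\BSS[1]$-marks in the invariant but never uses them, so you would be left with two unclosable cases in the induction step.
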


 The conclusion reads "for each byte $i$ of a message containing $l$ bytes,
 there exists a receiver cycle $\nu$ from which the receiver samples the byte correctly".
 The conclusion is a conjunction of two terms. The first one expresses the knowledge 
 of the receiver about the first bit of the synchronization sequence ($\BSS[0]$).
 The knowledge of this "mark" and clock drift induces three possible positions for the second
 bit of this sequence. These three possible positions imply four different 
 ways of sampling a byte. This is explained using the drift factor $\chi$ and the 
 metastability factor $\beta$. The former can take three values: $-1$, $0$, or $1$.
 The latter is either $0$ or $1$. Combining these two factors, we obtain four
 possibilities: $-1,0,1,2$. Because of these four possibilities, the total number of 
 cycles needed to sample a byte can have four different values.
 These values are expressed by the second term of the conclusion.
 In each case, state and counter values mean that the last bit has been sampled.
 One cycle after that, the byte register is properly updated.
 Finally, it takes between 79 to 82 cycles to sample a byte.

 This theorem also proves lower and upper bounds on the time at which the last bit of one byte is recovered.
 From a simple computation based on the marks of the conclusion, 
 these bounds can be expressed as functions of the reference clock ($\tau_\tref$) and the time ($e_s(c)$) when
 the first bit is put on the bus by the sender.

\subsection{Lemma 1: Crossing synchronization edges}
\label{sec:lemma1}

  The objective of Lemma~\ref{lemma1} is to prove for all possible states 
  before reading the synchronization sequence ($\BSS$) what are the 
  possible states after reading this sequence. 
  We illustrate the proof when reading the first byte, i.e., 
  after reading the initial bit sequences. 
  The other cases are discussed in Section~\ref{sec:in-step} about the induction step. 
  Our reasoning is illustrated in Figure~\ref{fig:sync-edge}.
  The first two lines show the output of the sender and how it is seen by the receiver.
  Black boxes indicate possible metastability.
  Question marks are used to denote unknown values. 

\begin{figure}
  \centering
  \input{sync_edge.pstex_t}
  \caption{Traversing synchronization edges}
  \label{fig:sync-edge}
\end{figure}

  As explained in Section~\ref{sec:proof-roadmap}, the receiver may be in different configurations at the time
  of the detection of the $\BSS[0]$-mark.
  We fix the initial step of the lemma to match the time of the detection of the $\BSS[0]$-mark. 
  We consider the case where the receiver is in state $\BSS[0]$ with a counter value at either $011$ or $100$. 
  The other cases are proven in a similar way.
  
  According to Proposition~\ref{thm:reg-affw} and assuming that the $\BSS[0]$-mark is known, 
  the $\BSS[1]$-mark has three possible times (one for each value of $\chi$).
  The potential metastability around that edge has the same three times.
  We consider bits sampled by the receiver at these times unknown. 
  At most, 
  three bits are unknown.    
  Depending on the values of these three bits, the automaton will spend more or less time in the states of $\BSS$.
  There is synchronization if the lower and the upper bound on this number of cycles allow proper sampling.
  These bounds are defined by Lemma~\ref{lemma1}, which proves that the automaton reaches 
  state $b[0]$ with counter value $011$ in at least 15 and at most 18 cycles.
  We now explain the proof of these lower and upper bounds.

  Let $t$ be the time of the affected cycle of $\BSS[0]$.
  If the three unknown bits are 0 (see line 3 "earlier sync" in Figure~\ref{fig:sync-edge}), 
  signal $\sync$ is high at $t + 7 + 4 = t + 11$. The counter is reset and signal $\strobe$ 
  is high at $t + 11 + 3 = t + 14$. In the next cycle, the automaton reaches state $z^{t + 15} = b[0]$.
  For any lower value of the counter, the automaton will reach this state earlier.

  If the unknown bits are 1 (see line 4 "latest sync" in Figure~\ref{fig:sync-edge}), 
  signal $\sync$ is high at $t + 10 + 4 = t + 14$. If the counter was $100$ initially, then it has reached
  value $010$ and $\strobe$ is high. 
  At the same time, signal $\sync$ is high, the automaton stays in $\BSS[0]$, and the counter is reset.
  At cycle $t + 17$, $\strobe$ is high and the automaton reaches $b[0]$
  with a correct counter value at $t+18$.
  For any larger value, the automaton requires more cycles to reach this state.

  The main statement includes all possible configurations when sampling the first byte, but 
  also all possible configurations when sampling byte $i$ in the induction step. 
  This statement assumes that the time of the first bit of the first part of the 
  synchronization sequence (the "$\BSS[0]$-mark") is known by the reader ($\cy(t, c + 16)$). 
  The conclusion contains the four possible ways to sample the synchronization sequence.
  In each term of the disjunction, we know the position of the first bit of the 
  second part of the synchronization sequence ($\BSS[1]$). This knowledge is crucial 
  to prove that synchronization is good enough to sample bits correctly (Lemma~\ref{lemma2}, Section~\ref{sec:lemma2}).

\begin{lemma}{\bf Synchronization}
  \label{lemma1}
  \begin{displaymath}
    \begin{array}{cl}
             & \cy(t, c + 16) \mbox{(* bss0 mark *)}\\
      \wedge & \mbox{(*Next are hyps. about starting point for the lemma*)} \\
             & \; (z^t = \BSS[0] \wedge \cnt^t \in \{ 011, 100 \} 
               \vee z^t = \FSS[0] \wedge \cnt^t \in \{ 001 , 010 \} \\
            & \vee   z^t = b[7] \wedge \cnt^t \in \{ 001, 010, 011 ,100\}) \mbox{(*for induction step*)}\\
      \rightarrow \\
             & \cy(t + 7, c + 24) \wedge \mbox{(* bss1 mark *)}\\
             & z^{t+15} = b[0] \wedge \cnt^{t + 15} = 011\\
      \vee   \\
             & (\cy(t + 7, c + 24) \vee \cy(t + 8, c + 24)) \wedge \mbox{(* bss1 mark *)}\\
             & z^{t+16} = b[0] \wedge \cnt^{t + 16} = 011\\
      \vee   \\
             & (\cy(t + 8, c + 24) \vee \cy(t + 9, c + 24)) \wedge \mbox{(* bss1 mark *)}\\
             & z^{t+17} = b[0] \wedge \cnt^{t + 17} = 011 \\
      \vee   \\
             & \cy(t + 9, c + 24) \wedge \mbox{(* bss1 mark *)}\\
             & z^{t+18} = b[0] \wedge \cnt^{t + 18} = 011
    \end{array}
  \end{displaymath}
\end{lemma}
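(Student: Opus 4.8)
The plan is to keep the analog timing facts cleanly separated from the deterministic evolution of the digital counter and automaton, and then to glue the two together by a bounded case analysis. First I would invoke Proposition~\ref{thm:reg-affw} with $\alpha = 8$: from the assumed $\BSS[0]$-mark $\cy(t,c+16)$ and $8 \le \pi$ it produces $\cy(t+8+\chi, c+24)$ for some $\chi \in \{-1,0,1\}$, so the $\BSS[1]$-mark sits at $t+7$, $t+8$, or $t+9$. These are exactly the marks distributed across the four disjuncts of the conclusion, so the whole argument splits into these three drift cases at the outset, and the falling edge of the input $\inp_r$ is located to within one cycle.

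Next I would fix the voted-bit waveform $v$ that actually drives the synchronization logic. Applying Proposition~\ref{thm:input-values} to the guaranteed copies of $\BSS[0]=1$ and of $\BSS[1]=0$, and feeding the result through the four-cycle-delayed majority vote (Proposition~\ref{thm:voted-bit}), I obtain that $v$ equals $1$ on the stable copies of $\BSS[0]$ and $0$ on the stable copies of $\BSS[1]$. The residual uncertainty from $\chi$ together with a possible metastable resolution leaves at most three consecutive \emph{unknown} cycles straddling the falling edge, where $v$ may be $1$ or $0$. Since $v$ is monotone across this window (once it has fallen it stays low), only four assignments are relevant, from the "earliest" edge (all unknowns $0$) to the "latest" edge (all unknowns $1$); this is the second axis of the case split and it is what ultimately produces the four reach times $t+15,\dots,t+18$.

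With the waveform frozen in each case, the remainder is a deterministic run of the design governed by Equations~\ref{eq:sync}, \ref{eq:cnt} and~\ref{eq:strobe}, which is the part I would discharge with IHaVeIt/NuSMV. Starting from each admissible configuration of the hypothesis (state $\BSS[0]$, $\FSS[0]$ or $b[7]$ with the listed counter values), the counter increments modulo eight and strobes whenever it reaches $010$, advancing the state on each strobe until it rests in $\BSS[1]$; $\sync$ then fires on the $1\to 0$ transition of $v$ (four cycles after the input edge), resetting $\cnt$, and three cycles later the next strobe drives the automaton into $b[0]$ with $\cnt=011$. My task in Isabelle is to supply the correct input window, read off that $b[0]$ is reached somewhere in $[t+15:t+18]$, and check that the reach time is compatible with the $\BSS[1]$-mark recorded in the corresponding disjunct.

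The delicate step — and the place where the choice of reset value $000$ and strobe value $010$ is genuinely tested — is the boundary case where $\cnt$ reaches $100$ just as the falling edge arrives, so that $\sync$ and a would-be $\strobe$ fall in the same cycle. Here Equation~\ref{eq:strobe} must be read with care: because $\strobe$ carries the guard $\lnot\sync$, the strobe is suppressed, the state is held, and only the counter is reset. I expect this coincidence, propagated through the high-counter $b[7]$ and $\FSS[0]$ starting configurations, to be the main obstacle: it is precisely what makes the $15$–$18$ cycle bounds tight and what would fail for a strobe--reset difference outside $[1:3]$. The surrounding bookkeeping — confirming that every (starting configuration, $\chi$, unknown-bit) combination lands in one of the four disjuncts with the right mark — is routine but voluminous, and is where I would rely most on the model checker rather than on hand calculation.
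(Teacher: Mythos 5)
Your proposal matches the paper's proof: Proposition~\ref{thm:reg-affw} with $\alpha=8$ gives the three candidate $\BSS[1]$-marks, Propositions~\ref{thm:input-values}/\ref{thm:dig-thm} pin down the known copies of $\BSS[0]=1$ and $\BSS[1]=0$ leaving only the bit(s) at the mark unknown, and the resulting finite set of input patterns is discharged per starting configuration by IHaVeIt/NuSMV (the paper's Proposition~\ref{prop:bss-seq}, one per mark position with a single free input bit, versus your four monotone edge positions --- the same case space). You also correctly identify the $\sync$-suppresses-$\strobe$ coincidence in the latest-sync case as the step that makes the $[15:18]$ bounds tight, which is exactly the boundary case the paper works through for the upper bound.
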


  This lemma is proven following the same approach as the one used to prove the correctness 
  of voted bits (see Section~\ref{ex:voted-bits}).
  The idea is to obtain from the low level model which inputs are unknown 
  and prove that the design works properly for all possible values of 
  these unknown inputs.

  As suggested by the informal description of the proof, there are at most three 
  unknown inputs. In fact, for each position of the $\BSS[1]$-mark, only one 
  input is unknown. It is the bit that appears exactly on this mark and it is 
  due to metastability. 
  For instance, if the mark is at its earlier position 
  (see line ``Earliest sync'' in Figure~\ref{fig:sync-edge}), then the only 
  unknown input is at time $t + 7$. Because instantiating Proposition~\ref{thm:dig-thm} 
  for this mark gives us that bits from $t + 8$ are known.
  In a similar way, if the mark is at its latest position (see line 4 "latest sync" in Figure~\ref{fig:sync-edge}), 
  then we also know that inputs at $t + 7$ and $t + 8$ must be one. 
  This is due to our assumption that clock drift is bounded, i.e., bits at $t + 7$ and $t + 8$ 
  are still in the safe sampling window starting at the $\BSS[0]$-mark.

  For each position of the $\BSS[1]$-mark, we prove a lemma on the digital design, which 
  shows the two possible lengths to sample the synchronization sequence.
  For instance, if the $\BSS[1]$-mark appears at $t + 7$, we prove that sampling the 
  synchronization sequence takes 15 to 16 cycles.
  This is expressed in Proposition~\ref{prop:bss-seq} below. 
  The hypotheses first state that at times $t+1$ until $t + 6$ six good copies 
  of $\BSS[0]$ are known and that at times $t + 8$  until $t + 13$ six good copies 
  of $\BSS[1]$ are known. Input at time $t + 7$ is left unspecified and the model-checker
  will have to consider all possible values. The rest of the hypotheses state 
  the different receiver states and counter-value that are possible at time $t$ when reading 
  the first copy of $\BSS[0]$. The conclusion shows the time points when the receiver 
  reaches state $b[0]$ with counter-value $011$, i.e., after reading the $\BSS$ sequence. 

\begin{proposition}
  \label{prop:bss-seq}
  \begin{displaymath}
    \begin{array}{cl}
             & \forall u \in [0:5], inp^{t + 1 + u} = 1\ 
      \wedge \forall v \in [0:5], inp^{t + 8 + v} = 0 \ \mbox{(* 2x6 bits known *)}\\ 
     \wedge  & \; (z^t = \BSS[0] \wedge \cnt^t \in \{ 010 , 011\}\ 
                \vee z^t = \FSS \wedge \cnt^t \in \{ 001, 010 \}  \mbox{(*a*)} \\
             & \vee z^t = b[7] \wedge \cnt^t \in \{ 001 , 010 , 011 , 100\})  \mbox{(*b*)}\\
   \rightarrow & \\
         & z^{t + 15} = b[0] \wedge \cnt^{t + 15} = 011
   \vee    z^{t + 16} = b[0] \wedge \cnt^{t + 16} = 011
    \end{array}
   \end{displaymath}  
\end{proposition}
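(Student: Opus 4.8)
The plan is to treat Proposition~\ref{prop:bss-seq} as a purely \emph{digital}, finite-state property of the receiver circuit (the two-stage synchronizer, the five-majority voter, the control automaton, and the counter~$\cnt$) and to discharge it by automatic model-checking with IHaVeIt/NuSMV, exactly as was done for the voted bit in Proposition~\ref{thm:voted-bit-LTL}. The statement mentions only $\inp_r$, the automaton state $z$, and $\cnt$; no analog entity ($\cy$, $\beta$, clock drift) occurs, so nothing needs to leave the discrete world. First I would phrase the hypotheses and conclusion as a $\Box$-prefixed LTL implication over the Kripke structure of the receiver. The two input blocks $\forall u\in[0:5],\inp_r^{t+1+u}=1$ and $\forall v\in[0:5],\inp_r^{t+8+v}=0$ become an antecedent built from the next-operator powers $X^{1}\ldots X^{6}$ and $X^{8}\ldots X^{13}$, the cell at $t+7$ is left as a free (nondeterministic) input, the three disjuncts on $(z^t,\cnt^t)$ become a disjunction of initial-state predicates, and the conclusion becomes $z^{t+15}=b[0]\wedge\cnt^{t+15}=011 \;\vee\; z^{t+16}=b[0]\wedge\cnt^{t+16}=011$. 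As in Proposition~\ref{thm:voted-bit-LTL}, I would cut the state space by case-splitting (here on the two values of the free input at $t+7$ and on the finitely many admissible starting configurations), let NuSMV decide each instance, and finally translate the LTL result back into Isabelle/HOL via the $\Box X^{n}$ semantics, just as Proposition~\ref{thm:voted-bit} was obtained from Proposition~\ref{thm:voted-bit-LTL}.

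The mechanism that makes the property true, and which the model-checker really confirms, is the following. Because the voted bit is a five-majority of $\inp_r$ at lags $2$ through $6$, the six known ones at $t+1..t+6$ pin the voted bit to $1$ throughout $t+5..t+10$, and the six known zeros at $t+8..t+13$ force it to $0$ from $t+12$ on; the single free cell at $t+7$ shifts the unique $1\to0$ edge of the voted bit to $t+11$ (when $\inp_r^{t+7}=0$) or to $t+12$ (when $\inp_r^{t+7}=1$). By Equation~\ref{eq:sync} this edge raises $\sync$ provided the automaton is then in $\BSS[1]$; the reset of Equation~\ref{eq:cnt} sends $\cnt$ to $000$ on the next cycle, three further increments bring it back to $010$, and $\strobe$ (Equation~\ref{eq:strobe}) fires and advances the automaton into $b[0]$ with $\cnt=011$ exactly four cycles after the edge --- hence at $t+15$ or $t+16$. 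The admissible starting pairs $(z^t,\cnt^t)$ are precisely those from which the automaton is guaranteed to have entered $\BSS[1]$ before the voted-bit edge, so that the edge is never missed.

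The hard part --- and the reason the statement is delegated to an exhaustive tool rather than proved by a short hand argument --- is that the hypotheses pin down neither the inputs before $t$ (nor at $t$) nor the contents of the synchronizer and shift registers at time $t$. Consequently the voted bit may be arbitrary on the ``garbage window'' $t..t+4$, and for a starting configuration that reaches $\BSS[1]$ early (for instance $z^t=\BSS[0]\wedge\cnt^t=010$) a spurious change of the voted bit can raise $\sync$ and reset $\cnt$ before the genuine synchronization edge. I would therefore have to argue --- or rather have the model-checker verify over all reachable internal states --- that such spurious resets do no harm: while $\sync$ is high the automaton cannot be strobed, so it simply remains parked in $\BSS[1]$, and once the voted bit has stabilised at $1$ (from $t+5$) no further $\sync$ can occur until the genuine edge, whose timing then fixes the outcome deterministically. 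Checking that every admissible $(z^t,\cnt^t)$ indeed leaves the automaton in $\BSS[1]$ in time, for both resolutions of $\inp_r^{t+7}$ and for every reachable register content, is the real content of the lemma, and it is exactly this bounded but combinatorially awkward case analysis that IHaVeIt's model reduction followed by NuSMV is meant to dispatch.
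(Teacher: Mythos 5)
Your proposal matches the paper's proof: Proposition~\ref{prop:bss-seq} is indeed discharged fully automatically by IHaVeIt/NuSMV as a purely digital bounded property, with a case split introduced only to keep the model-checking tractable (the paper happens to split on the starting-state hypotheses (*a*) versus (*b*) rather than on the free input at $t+7$, but that is an efficiency detail, not a different method). Your account of the underlying mechanism (the voted-bit edge landing at $t+11$ or $t+12$ and propagating through $\sync$, the reset, and $\strobe$) is consistent with the paper's informal walkthrough in Section~\ref{sec:lemma1}.
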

\begin{proof}
  By IHaVeIt. For efficiency of the computations, we decompose this proposition in two propositions. 
  We prove the conclusion assuming hypotheses $\mbox{(*a*)}$ in one proposition. 
  In another one, we prove the conclusion assuming \mbox{(*b*)}. 
  Each proof is fully automatic. 
\end{proof}

\subsection{Lemma 2: Sampling bytes correctly}
\label{sec:lemma2}

\begin{figure}
  \centering
  \input{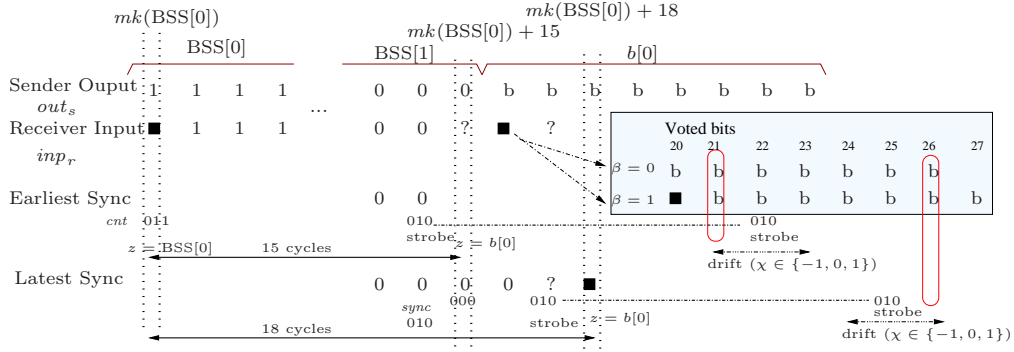}
  \caption{Sampling bytes correctly}
  \label{fig:sampling}
\end{figure}

  The previous Lemma shows the different possible states of the receiver 
  after and before reading the $\BSS$ sequence. Lemma~\ref{lemma2} shows 
  that it is possible to sample the transmitted byte for all these possibilities.
  Let $t$ denote the receiver cycle reading the first copy of $\BSS[0]$.
  To simplify, we only consider the case where the receiver is in state 
  $z^t = \BSS[0]$ with counter $\cnt^t = 011$ when reading the first bit 
  of $\BSS[0]$. This case is pictured in Figure~\ref{fig:sampling}. All other cases would be proven 
  in a similar way.
  The first two lines show the digital output ($\out_s$ in Figure~\ref{fig:mixed-scheme}) of the sender and the digital 
  input of the receiver ($\inp_r$ in Figure~\ref{fig:mixed-scheme}). Black boxes show potential
  metastabilities and a '?' illustrates the fact that the eighth copy is not certainly correct.   
  Line "Earliest Sync" considers the shortest traversal of the synchronization sequence.
  State $b[0]$ is reached after 15 cycles ($t + 15$). Line "Latest Sync" considers the longest
  traversal of the synchronization sequence. State $b[0]$ is reached after 18 cycles ($t + 18$).
  The large box shows the values of the voted bit, which is simply the receiver input 
  delayed by four cycles. The first line shows the case of no metastability -- or good resolution of it ($\beta = 0$) -- when reading the 
  first copy of $b[0]$. The second lines shows bad resolution of the metastability ($\beta = 1$). 
  Numbers indicate cycle numbers counting from time $t$ when the receiver reads the 
  first copy of $\BSS[0]$.
  The possible strobe points for the earliest and latest synchronization are also shown. 
  In total there are four possible strobe points depending on the four durations of traversing the synchronization sequence.
  Strobing appears at $t + [15:18] + 7$ for $b[0]$. In general, for any bit $b[j]$ strobing appears
  at $t + [15:18] + 8 \cdot j + 7$.
  We see here that the position of the strobe points is fully 
  determined by the traversal of the synchronization sequence $\BSS$. It does not depend on clock drift.
  In contrast, the relative position of the voted bits may shift by one cycle. This is represented 
  by the different values of drift factor $\chi$.
  The objective of Lemma~\ref{lemma2} is to prove that all strobe points coincide with 
  a good voted bit.
  
  \subsubsection{Positions of the voted bits}

  We assumed that at receiver cycle $t$ the first bit of $\BSS[0]$ is read.
  This bit was put on the bus at sender cycle $c$. 
  We are interesting in the bits sent after the two bits of the $\BSS$ sequence.
  We want to read bit $b[j]$ which is put on the bus $8 \cdot (j + 2)$ cycles after cycle $c$. 
  We simply instantiate Proposition~\ref{prop:voted-bit2} (Section~\ref{ex:voted-bits})
  with $\xi = t$ and $\alpha = 8 \cdot (j + 2)$
  and obtain the following:
  \begin{displaymath}
    \bigvee_{\chi \in \{-1, 0 , 1\}} \forall x \in [4:10], v^{t + 8 \cdot (j + 2) + \chi + \beta_{c + 8 \cdot (j + 2)}^{t + 8 \cdot (j + 2)+ \chi} + x} = \out_s[c+8 \cdot (j + 2) -1]    
  \end{displaymath}
  The right hand side of this equation denotes the value of the bit sent by the sender unit, i.e., $b[j]$.
  The left hand-side shows the different positions of the voted bits. 
  There are seven good bits, i.e., one for every value of $x \in [4:10]$.
  Ideally, each bit $b[j]$ is read at receiver cycle $t + 8 \cdot (j + 2)$.
  Because of clock drift, this can suffer an error of one cycle.
  Each bit $b[j]$ is then read at cycle $t + 8 \cdot (j + 2) + \chi$.
  For each bit $b[j]$, there might be a metastability when reading the first copy of it, i.e., at receiver
  cycle $t + 8 \cdot (j + 2) + \chi$. The effect of this metastability is expressed by $\beta_{c + 8 \cdot (j + 2)}^{t + 8 \cdot (j + 2)+ \chi}$, 
  simply written $\beta$ in the remainder of this section.

  For every bit $b[j]$, the position of the corresponding good voted bits is equal to the following expression:
  \begin{displaymath}
    t + 8 \cdot (j + 2) + \beta + \chi + x
  \end{displaymath}

  The objective of Lemma~\ref{lemma2} is to show that there is always an $x \in [4:10]$ to match
  the position of a strobe point with a voted bit. Formally, we have to solve the following equality 
  where the left hand side corresponds to strobe points and the right hand side to 
  the cycles at which the voted bit is correct.
  \begin{equation}
    t + [15:18] + 8 \cdot j + 7 = t + \alpha + \beta + \chi + x 
  \end{equation}




  \subsubsection{Smallest value of $x$}

  The minimum $x$ is required when the right hand side is maximized and the left hand side of the equality is the earliest cycle.
  This means that the receiver is one cycle behind the sender. Because clock ticks differ at most by one,
  this implies that $\chi$ cannot take value 1.
  The right hand side is therefore maximized with
  $\beta = 1$ and $\chi = 0$.
  We need to find $x$ such that:
  \begin{displaymath}
    t + 15 + 8 \cdot j + 7 = t + 16 + 8 \cdot j + 1 + 0 + x 
  \end{displaymath}
  The solution is $x = 5$. 
  If the receiver would strobe at counter value $001$, traversing the synchronization edges would take 
  a cycle less. Strobe points would be positioned at $t + [14:17] + 8 \cdot j + 7$. 
  The above equation would become $t + 14+ 8 \cdot j + 7 = t + 16 + 8 \cdot j + 1 + 0 + x $.
  The solution would be $x = 4$ and would still be in the interval $[4:10]$.
  This means that counter value $001$ would 
  be a limit, i.e., the earliest working synchronization point. 
  \begin{statement}
    \label{statement1}
    The lowest reset value of counter $\cnt$ is $001$. As the counter is reset to $000$, the lowest difference
    between the strobe and the reset points is one cycle.
  \end{statement}

  \subsubsection{Largest value of $x$}
  
  The maximum $x$ is required when the right hand side is minimized and the left hand side of the equality is the latest cycle.
  This means that the receiver is one cycle ahead of the sender. Again, because of the bound on clock drift,
  this implies that $\chi \neq -1$.
  The right hand side is therefore minimized with $\beta = 0$ and $\chi = 0$.
  Here, we need to find $x$ such that:
  \begin{displaymath}
   t+ 18 + 8 \cdot j + 7 = t+ 16 + 8 \cdot j + 0 + 0 + x 
  \end{displaymath}
  The solution is $x = 9$. 
  If the receiver would strobe at counter value $011$, traversing the synchronization sequence would take
  one cycle more. Strobe points would be positioned at $t + [16:19] + 8 \cdot j + 7$.
  The above equality would become $t + 19+ 8 \cdot j + 7 = t + 16 + 8 \cdot j + 1 + 0 + x$.
  The solution would be $x = 10$ and would still be in the interval $[4:10]$.
  Note that this counter value is equivalent to the one proposed by the Flex Ray standard~\cite{PS05}. 
  Value $100$ proposed in~\cite{AutoICCD05} would be outside this limit and is therefore not adequate.
  \begin{statement}
    \label{statement2}
    The largest reset value of counter $\cnt$ is $011$. As the counter is reset to $000$, the largest
    difference between the strobe and the reset points is three cycles. 
  \end{statement}

  \subsubsection{Statement of Lemma~\ref{lemma2}}

  The main statement builds on Lemma~\ref{lemma1} and shows the four different ways of sampling 
  a byte, starting from the first bit of the synchronization sequence. 
  In each way, we also have the knowledge of the $\BSS[1]$-mark. 

\begin{lemma}
  \label{lemma2}
  {\bf Sampling bytes correctly.}
  \begin{displaymath}
    \begin{array}{cl}
                & \cy(t, c + 16) \\
     \wedge  & \; (z^t = \BSS[0] \wedge \cnt^t \in \{ 010 , 011\}\ 
                \vee z^t = \FSS \wedge \cnt^t \in \{ 001 , 010 \}    \\
             & \vee z^t = b[7] \wedge \cnt^t \in \{ 001 , 010 , 011 ,100 \}) \\
     \rightarrow \\
             & \cy(t + 7, c + 24) \\
             & \wedge z^{t+78} = b[7] \wedge \cnt^{t + 78} = 010 
               \wedge \BYTE^{t + 79} = \tup{\out_s^{c + 16 + 8 \cdot (j + 2)}}, j \in [7:0] \\
       \vee   & \\
              & (\cy(t + 7, c + 24) \vee \cy(t + 8, c + 24)) \\
             & \wedge z^{t+79} = b[7] \wedge \cnt^{t + 79} = 010 
               \wedge \BYTE^{t + 80} = \tup{\out_s^{c + 16 + 8 \cdot (j + 2)}}, j \in [7:0] \\
       \vee   & \\
              & (\cy(t + 8, c + 24) \vee \cy(t + 9, c + 24)) \\
             & \wedge z^{t+80} = b[7] \wedge \cnt^{t + 80} = 010 
               \wedge \BYTE^{t + 81} = \tup{\out_s^{c + 16 + 8 \cdot (j + 2)}}, j \in [7:0] \\
       \vee   & \\
              & \cy(t + 9, c + 24) \\
             & \wedge z^{t+81} = b[7] \wedge \cnt^{t + 81} = 010 
               \wedge \BYTE^{t + 82} = \tup{\out_s^{c + 16 + 8 \cdot (j + 2)}}, j \in [7:0] \\

    \end{array}
  \end{displaymath}
\end{lemma}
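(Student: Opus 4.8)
The plan is to build directly on Lemma~\ref{lemma1}, which already splits the reasoning into the four ways of traversing the synchronization sequence, and to feed its conclusion into the voted-bit analysis of Section~\ref{ex:voted-bits}. First I would apply Lemma~\ref{lemma1} under the present hypotheses on $z^t$ and $\cnt^t$. This discharges the start of the byte and yields four disjoint cases, in each of which the receiver sits in state $b[0]$ with $\cnt = 011$ at one of the cycles $t+15$, $t+16$, $t+17$, or $t+18$, and in which the position of the $\BSS[1]$-mark is pinned down to within one cycle. These four cases are exactly the four disjuncts of Lemma~\ref{lemma2}, so the remaining work is, case by case, to run the automaton forward from $b[0]$ to $b[7]$ and to show that each strobe lands on a correct voted bit.

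For the \emph{digital} half I would appeal to a model-checked property (in the style of Proposition~\ref{prop:bss-seq}, discharged by IHaVeIt): once the automaton is in $b[0]$ with $\cnt = 011$ and no falling edge occurs inside the byte, it advances deterministically, issuing one strobe per bit-state eight cycles apart. The strobe for $b[j]$ therefore lands at $t + [15:18] + 8\cdot j + 7$, so state $b[7]$ with $\cnt = 010$ is reached at $t + 78 + d$ for $d \in [0:3]$, and $\strobe$ writes the accumulated voted bits into $\BYTE$ one cycle later, giving $\BYTE^{t + 79 + d}$. This yields the cycle indices $78/79/80/81$ and $79/80/81/82$ of the four conclusion lines, as well as the propagated $\BSS[1]$-marks carried over from Lemma~\ref{lemma1}.

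For the \emph{analog} half I would instantiate Proposition~\ref{prop:voted-bit2} with $\xi = t$ and $\alpha = 8\cdot(j+2)$, using Proposition~\ref{thm:reg-affw} to propagate the mark, to obtain for each bit $b[j]$ seven good copies of the voted bit at cycles $t + 8\cdot(j+2) + \chi + \beta + x$ with $x \in [4:10]$, where $\chi \in \{-1,0,1\}$ is the drift factor and $\beta$ the metastability factor. The crux is then to solve the matching equation $t + [15:18] + 8\cdot j + 7 = t + 8\cdot(j+2) + \beta + \chi + x$, i.e.\ to show that for every strobe point there is an $x \in [4:10]$ selecting a correct voted bit. This reduces to the two extremal computations already carried out in Statements~\ref{statement1} and~\ref{statement2}: the smallest admissible $x$ ($=5$, right-hand side maximized with $\beta=1$, $\chi=0$, earliest strobe) and the largest ($=9$, right-hand side minimized with $\beta=0$, $\chi=0$, latest strobe) both lie safely inside $[4:10]$, so the reset value $000$ with strobe at $010$ works. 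Combining the two halves---substituting the matched voted bit into the $\BYTE$ written at each strobe---yields $\BYTE = \tup{\out_s^{c + 16 + 8\cdot(j+2)}}$ and closes each case.

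The main obstacle I expect is the interface between the two halves rather than either half alone. The strobe points are fixed by the purely digital traversal of $\BSS$ and are \emph{independent} of clock drift, whereas the voted-bit positions shift by $\chi$ and $\beta$; the delicate part is verifying that these two schedules never slip past one another across all eight bits and all four traversal lengths simultaneously. Keeping the bookkeeping of $\chi$ and $\beta$ consistent between Proposition~\ref{thm:reg-affw}/Proposition~\ref{prop:voted-bit2} and the digital strobe schedule---so that the extremal bounds of Statements~\ref{statement1} and~\ref{statement2} really do cover every combination---is where the argument must be most careful.
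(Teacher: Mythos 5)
Your proposal matches the paper's own argument: the paper likewise feeds the four traversal lengths from Lemma~\ref{lemma1} into strobe points at $t + [15:18] + 8\cdot j + 7$, obtains the voted-bit positions by instantiating Proposition~\ref{prop:voted-bit2} with $\xi = t$ and $\alpha = 8\cdot(j+2)$, and closes the gap by solving the same matching equation with the extremal values $x=5$ and $x=9$ of Statements~\ref{statement1} and~\ref{statement2}. The observation you flag as the delicate point --- that the strobe schedule is drift-independent while the voted bits shift by $\chi$ and $\beta$ --- is exactly the one the paper makes explicit, so nothing is missing.
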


  \subsection{Induction Step}
  \label{sec:in-step}

    We perform a proof by induction on the number of bytes that is transmitted. 
    We first extract an arbitrary cycle $\nu$ from our induction hypothesis.
    We know that from this cycle we can sample byte $i$ properly.
    We then use Lemma~\ref{lemma2} to find a cycle from which 
    byte $i + 1$ can also be recovered. This shows the existential quantification. 
    For byte $i$, the induction hypothesis gives us a $\BSS[0]$-mark.
    The other part of the induction hypothesis gives us four possible
    completion times for sampling byte $i$.
    Our induction hypothesis for an arbitrary $\nu$ is as follows:
     \begin{displaymath}
     \begin{array}{ccl}
       & &
                        \cy(\nu, c + 16 + 80 \cdot i) \mbox{ (* bss0 mark *)} \\ 
        \wedge\\ &        & \cy(\nu + 7, c + 24 + 80 \cdot i) \mbox{(*bss1 mark*)}\\
               &        & z^{\nu + 78} = b[7] \wedge \cnt^{\nu + 78} = 010 
                          \wedge \BYTE^{\nu + 79} = \tup{\out_s^{c + 16 + 80 \cdot i + 8 \cdot (j + 2)}} \\ 
               &  \vee  & \\
               &        & (\cy(\nu + 7, c + 24 + 80 \cdot i) \vee \cy(\nu + 8, c + 24 + 80 \cdot i)) \mbox{(*bss1 mark*)}\\
               &        & z^{\nu + 79} = b[7] \wedge \cnt^{\nu + 79} = 010 
                          \wedge \BYTE^{\nu + 80} = \tup{\out_s^{c + 16 + 80 \cdot i + 8 \cdot (j + 2)}} \\ 
               &  \vee  & \\
               &        & (\cy(\nu + 8, c + 24 + 80 \cdot i) \vee \cy(\nu + 9, c + 24 + 80 \cdot i)) \mbox{(*bss1 mark*)}\\
               &        & z^{\nu + 80} = b[7] \wedge \cnt^{\nu + 80} = 010 
                          \wedge \BYTE^{\nu + 81} = \tup{\out_s^{c + 16 + 80 \cdot i + 8 \cdot (j + 2)}} \\ 
               &  \vee  & \\
               &        & \cy(\nu + 9, c + 24 + 80 \cdot i) \mbox{(*bss1 mark*)}\\
               &        & z^{\nu + 81} = b[7] \wedge \cnt^{\nu + 81} = 010 
                          \wedge \BYTE^{\nu + 82} = \tup{\out_s^{c + 16 + 80 \cdot i + 8 \cdot (j + 2)}} \\ 
      \end{array}
    \end{displaymath}

    This induction hypothesis is pictured in Figure~\ref{fig:ind-step}.
    It starts with our arbitrary cycle $\nu$ at which the receiver starts
    sampling byte $i$. Our induction hypothesis contains also
    marks (cycle $\mu$) for the second part of the $\BSS$ sequence.
    Sampling byte $i + 1$ starts at cycle $\nu'$ and $\BSS[1]$ is seen
    at cycle $\mu'$. The idea of the proof is (1) knowing $\nu$ and $\mu$
    we obtain values for $\nu'$ and $\mu'$ using Proposition~\ref{thm:reg-affw}
    and (2) we instantiate Lemma~\ref{lemma2} at $\nu'$ to show
    that byte $i + 1$ can be sampled properly.

    \begin{figure}
      \centering
      \input{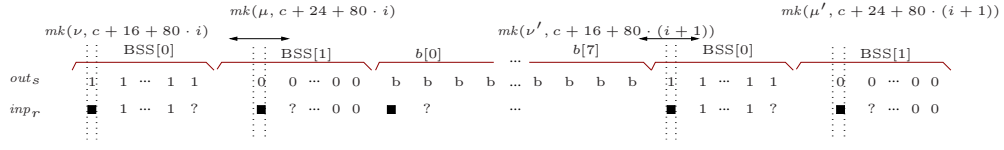}
      \caption{Induction step}
      \label{fig:ind-step}
    \end{figure}

    While sampling byte $i$, clocks are likely to drift. Consequently, 
    the $\BSS[0]$-mark (i.e., position of $\nu'$) for the next byte is known with the error defined by 
    $\chi$. From Proposition~\ref{thm:reg-affw} with $\xi = \nu$ and $\alpha = 80 \cdot i$, 
    we have $\nu' = \nu + 80 + \chi$. Thus,
    we have three different possible $\BSS[0]$-marks for byte $i + 1$: 
    \begin{displaymath}
        \bigvee_{\chi \in \{-1, 0, 1\}} \cy(\nu + 80 + \chi, c +16 + 80 \cdot (i + 1))
    \end{displaymath}

    To use Lemma~\ref{lemma2}, we also need to know when the receiver control 
    automaton sampled the last bit of the byte, i.e., when $z = b[7]$. 
    There are four possible times corresponding to the four possible
    completion times of sampling byte $i$. At the end, we have 
    $4  \ast 3 = 12$ cases in our induction step.

    Let us consider the case where there is no drift
    and the receiver sees the first bit of the 
    next $\BSS$ sequence exactly 80 cycles after the 
    previous one. Formally, we have: 
     \begin{displaymath}
        \cy(\nu + 80, c +16 + 80 \cdot (i + 1)) 
    \end{displaymath}
    
    For this perfect "mark" there are still four possible 
    ways for the receiver to sample byte $i$. From the induction 
    hypothesis, there
    are four possible times when the 
    receiver is sampling the last bit and ready to strobe and move to state $\BSS[0]$,
    i.e., the receiver is in state $b[7]$ with $\cnt = 010$. Formally, we have the following
    cases:
    \begin{equation}
      \label{4-cases}
      \begin{array}{cl}
                    & z^{\nu + 78} =  b[7]  \wedge  \cnt^{\nu + 78} =  010 \\
          \vee   & z^{\nu + 79} =  b[7]  \wedge  \cnt^{\nu + 79} =  010 \\
          \vee   & z^{\nu + 80} =  b[7]  \wedge  \cnt^{\nu + 80} =  010 \\
          \vee   & z^{\nu + 81} =  b[7]  \wedge  \cnt^{\nu + 81} =  010 
      \end{array}
    \end{equation}
    
    To instantiate Lemma~\ref{lemma2}, we need to obtain that the receiver is in state $b[7]$ with a proper
    counter value when the sender puts the first bit of  $\BSS$ on the bus, i.e., at $\nu + 80$, 
    the time of the $\BSS[0]$-mark.
    In the earliest case, the counter reaches value $010$ at $\nu + 78$. So, at time $\nu + 80$ the counter has value $100$.
    In the latest case, the counter equals $010$ at $\nu + 81$ and therefore it has value $001$ at time $\nu + 80$.
    In the remaining two cases, the values at time $\nu + 80$ would be $011$ or $010$.
    For all these cases, the premises of Lemma~\ref{lemma2} are satisfied for $t = \nu + 80$ and $c = c + 80 \cdot i$.
    This shows that there exist four possible cycles at which byte $i+1$ can be sampled
    properly for each one of these times.

    Let us consider the case when the $\BSS[0]$-mark is early.
    Formally, we have:
     \begin{displaymath}
        \cy(\nu + 79, c +16 + 80 \cdot (i + 1)) 
    \end{displaymath}
    
    We have the same cases as Equation~\ref{4-cases}. 
    But we must consider the counter value at time $t = \nu + 79$ instead of $\nu + 80$.
    Assume the latest
    sampling time (the fourth case in Equation~\ref{4-cases}). 
    The counter equals $010$ and $z = b[7]$ at time 
    $\nu + 81$. Consequently, at time $\nu + 79$ the counter is $000$
    and the receiver automaton is still sampling $b[7]$.     
    Under this configuration it is not possible to instantiate
    Lemma~\ref{lemma2} and in fact the receiver 
    would not be able to synchronize. 
    Note that the induction hypothesis also gives us the time of the
    $\BSS[1]$-mark for byte $i$. For this latest case, we know that this mark
    was seen by the receiver at time $\nu + 9$, i.e., we have $\cy(\nu + 9, c + 24 + 80 \cdot i)$.
    Applying Proposition~\ref{thm:reg-affw} on this $\BSS[1]$-mark with $\alpha = 72$ gives us 
    another three possible times for the $\BSS[0]$-mark of byte $i + 1$.
    Formally, we have:
    \begin{equation}
      \label{eq:3}
       \bigvee_{\chi \in \{-1, 0 , 1\}}  \cy(\nu + 9 + 72 + \chi, c + 24 + 80 \cdot i + 80)
    \end{equation}
    This means that the $\BSS[0]$-mark is at the earliest ($\chi = -1$) at $ \nu + 80$.
    This contradicts our assumption that the mark is at $\nu + 79$.
    The remaining cases when the $\BSS[0]$-mark is at $\nu + 79$ are proven 
    using Lemma~\ref{lemma2} as explained above.

    Let us consider the case when the $\BSS[0]$-mark is late.
     Formally, we have:
     \begin{displaymath}
        \cy(\nu + 81, c +16 + 80 \cdot (i + 1)) 
    \end{displaymath}
    Again we have the same cases as Equation~\ref{4-cases} 
    and must consider the counter value at time $\nu + 81$.
    Assume the earliest sampling time of byte $i$, i.e.,
    the counter has value $010$ at time $\nu + 78$.
    This means that at time $\nu + 81$, counter has 
    value $101$ and the receiver is in state $\BSS[0]$.
    Under this configuration it is not possible to use 
    Lemma~\ref{lemma2}. Here again we use the 
    fact that the induction hypothesis gives us a
    time for the $\BSS[1]$-mark from which we can 
    derive a contradiction. In the earliest sampling time,
    this mark is for byte $i$ at time $\nu + 7$ and we have 
    $\cy(\nu + 7, c + 24 + 80 \cdot i)$. We apply 
    Proposition~\ref{thm:reg-affw} on this $\BSS[1]$-mark with $\alpha = 72$ to obtain
    three possible times for the $\BSS[0]$-mark of byte $i + 1$.
    Formally, we have:
    \begin{equation}
      \label{eq:4}
      \bigvee_{\chi \in \{-1, 0 , 1\}}  \cy(\nu + 7 + 72 + \chi, c + 24 + 80 \cdot i + 80)
    \end{equation}
    This means that the $\BSS[0]$-mark is at the latest ($\chi = +1$) at $ \nu + 80$.
    This contradicts our assumption that the mark is at $\nu + 81$.
    The remaining cases when the $\BSS[0]$-mark is at $\nu + 81$ are proven 
    using Lemma~\ref{lemma2} as explained earlier in this section.

\section{Related work}
\label{sec:related}

  The first verification effort about physical layer protocols was carried out by Moore~\cite{BIPHIM93}.
  Moore developed a general model of asynchronous communications as a function 
  in the logic of the ACL2 theorem prover~\cite{KM00}. Moore's model assumes distortion 
  around sampling edges and does not allow for clock jitter. 
  Sender and receiver modules 
  are also represented by two functions. Moore's correctness criterion states that 
  the composition of these three functions is an identity. He applied this approach 
  to the verification of a Biphase-Mark protocol. 
  
  Moore's work inspired many studies around this protocol.
  Recently, Vaandrager and de Groot~\cite{vaandrager06} modeled the protocol 
  and analog behaviors using a network of timed-automata. 
  Their model is slightly more general than Moore's and allows for clock jitter.
  They can derive tighter bounds for the Biphase-Mark protocol.
  Previously, timed-automata have been used to verify a low level protocol
  based on Manchester encoding and developed by Philips~\cite{bosscher94verification}.
  Another recent proof of the Biphase-Mark protocol has been proposed by Brown and Pike~\cite{brown_pike_06}.
  They developed a general model of asynchronous communications in the formalism
  of the tool SAL~\cite{SAL2} developed at SRI. 
  Their model includes clock jitter and metastability.
  Using $k$-induction, the verification of the parameterized specification of Brown and Pike is largely automatic.
  All these studies tackle \textit{protocol specification} only and not actual hardware implementation.
  They prove functional correctness. 
  We prove a more precise theorem about a gate-level hardware implementation and from which bounds 
  on the transmission duration can be derived. 

  The verification of analog and mixed signal (AMS) designs is a relatively young research field. A recent survey
  gave an overview of this emerging research area~\cite{AMSVerif08}. 
  The authors identify several successful applications of  automatic techniques (equivalence checking, model checking, or run-time verification) 
  in the context of AMS designs.  Our work is more related to the last category identified in this survey, namely proof based methods.
  Hanna~\cite{hanna94reasoning,hanna98} used predicates to approximate analog behaviors at the transistor level.
  The predicates can be embedded in digital proofs. 
  His work is not specifically targeted to communication circuits and does not consider timing parameters, metastability or clock drift.
  We consider only gates and not their structure in terms of transistors.
  Recently, Al Sammane \textit{et al.}~\cite{Al-SammaneZT07} proposed a new symbolic verification methodology based on the computer algebra system Mathematica.
  This approach is based on  a combination of induction and symbolic simulations. It is suitable to systems that can be described using discrete-time models.
  One contribution of our work is to combine discrete-time models with continuous time models.

\section{Conclusion and future work}
\label{sec:conclu}

  We presented the correctness proof of a time-triggered interface implementing 
  the bit-clock synchronization mechanism of the 
  FlexRay standard for automotive systems. 
  This proof involves to simultaneously prove that the receiver keeps track of the correct 
  bits and that its hardware allows for a proper synchronization. This difficulty comes from the fact
  that the hardware controls the state machine which in turn controls the hardware.
  The bit-clock synchronization algorithm works by resetting a counter when detecting a synchronization sequence.
  This specific value is a crucial parameter. Our proof reveals the exact values of this counter that guarantee 
  reliable transmissions. This proves and disproves values proposed in the literature.
  Our proof is based on a general and precise model of asynchronous communications which includes 
  clock drift, clock jitter and metastability. The proof is performed using a hybrid 
  methodology that combines interactive reasoning in Isabelle/HOL and automatic model-checking 
  using NuSMV within Isabelle via the tool IHaVeIt. 
  
  Our model of asynchronous communications is very general.
  The model is about 2 000 lines of Isabelle/HOL code\footnote{See http://www.cs.ru.nl/$\sim$julien/Julien\_at\_Nijmegen/corr11.html}.
  It can be easily re-used. A user deals only with Proposition~\ref{thm:reg-affw}, Proposition~\ref{thm:input-values}, 
  and our drift assumption (Definition~\ref{eq:bcd}).
  The design that we presented and analyzed is part of a more complex system
  which includes a fault-tolerant scheduler. 
  Our model, its integration with IHaVeIt, and the supporting methodology 
  have been re-used to verify -- at the gate-level -- this fault-tolerant 
  scheduler~\cite{abk:memocode08}.

  The proof presented here was developed in about one man-year and is about 8 000 lines.
  Most of the time was spent developing the model and checking whether the model 
  was too weak or that there was an error in the design. We indeed discovered
  errors in early designs. 
  Most of the proof about our FlexRay-like interface is dedicated to the deduction 
  of valid digital inputs from the analog 
  transmission. This technique is independent of the design under verification. 
  If one would prove a similar design, one would be able to re-use most of 
  these lemmas. The main task would be to adapt the digital lemmas to this 
  new design. These lemmas would be proven automatically.
  We estimate that the time needed to develop such a new proof
  would be about a couple of weeks.

  An interesting future research direction would be to structure the proof in a way that will make this 
  separation between design-dependent lemmas and more general ones explicit. 
  To this end, we need to identify a set of constraints on the digital design 
  that would be sufficient to prove our final theorem. This would reduce 
  the analysis of similar designs to proving different instances of these \emph{digital} propositions.
  The theorem proving efforts are performed while formalizing computer architectures.
  The verification of particular designs reduce to discharging a set of constraints which are more likely
  to fall into their scope and limit the state-space explosion problem.

\section*{Acknowledgments}

Part of this work was carried out while the author 
was affiliated with the University of Saarland, Saarbr\"ucken, Germany. 
This work was funded by the German Federal Ministry of Education and Research (bmb+f)
in the framework of the Verisoft project under grant 01 IS C38.
This work initiated from the lecture ``Computer Architecture 2 -- Automotive Systems'' 
given by Paul at Saarland University and notes taken by students\footnote{www-wjp.cs.uni-sb.de/lehre/lehre.php}.

\bibliographystyle{plain}

\end{document}